\definecolor{mycolor2}{RGB}{164,224,187} 
\definecolor{mycolor1}{RGB}{53,122,162} 
\newcommand{\colorcell}[1]{%
    \cellcolor{mycolor1!#1!mycolor2}%
    \ifnum#1>49%
        {\Large \textcolor{white}{{#1}}}
    \else
        {\Large \textcolor{black}{{#1}}}
    \fi
}
\newtheorem{definition}{Definition} 
\newtheorem{lemma}{Lemma}
\newtcolorbox{AIbox}[2][]{aibox,title=#2,#1}
\title{ArchRAG: Attributed Community-based Hierarchical \\ Retrieval-Augmented Generation}
\author{
    Shu Wang\textsuperscript{\rm 1},
    Yixiang Fang\textsuperscript{\rm 1}\protect\thanks{Corresponding author.},
    Yingli Zhou\textsuperscript{\rm 1},
    Xilin Liu\textsuperscript{\rm 2},
    Yuchi Ma\textsuperscript{\rm 2}
}
\begin{document}

\maketitle

\begin{abstract}
Retrieval-Augmented Generation (RAG) has proven effective in integrating external knowledge into large language models (LLMs) for solving question-answer (QA) tasks.
The state-of-the-art RAG approaches often use the graph data as the external data since they capture the rich semantic information and link relationships between entities.
However, existing graph-based RAG approaches cannot accurately identify the relevant information from the graph and also consume large numbers of tokens in the online retrieval process.
To address these issues, we introduce a novel graph-based RAG approach, called \underline{A}tt\underline{r}ibuted \underline{C}ommunity-based \underline{H}ierarchical \underline{RAG} (ArchRAG), by augmenting the question using attributed communities, and also introducing a novel LLM-based hierarchical clustering method.
To retrieve the most relevant information from the graph for the question, we build a novel hierarchical index structure for the attributed communities and develop an effective online retrieval method.
Experimental results demonstrate that ArchRAG outperforms existing methods in both accuracy and token cost.
%
\end{abstract}

%
%

\begin{links}
    \link{Code}{https://github.com/sam234990/ArchRAG}
    \link{Extended version}{https://arxiv.org/abs/2502.09891}
\end{links}


\section{Introduction}

Retrieval-Augmented Generation (RAG) has emerged as a core approach for enhancing large language models (LLMs) by enabling access to domain-specific and real-time updated knowledge beyond their pre-training corpus~\shortcite{fan2024survey,gao2023retrieval,hu2024rag,huang2024survey,wu2024retrieval,yu2024evaluation,zhao2024retrieval}.
By improving the trustworthiness and interpretability of LLMs, RAG has been widely adopted across a broad range of applications~\shortcite{liu2024survey,zheng2024large,nie2024survey,li2023large,ghimire2024generative,wang2024large}.
Current state-of-the-art RAG approaches often use graph-structured data as external knowledge, due to its ability to capture the rich semantics and relationships.
Given a question $Q$, the key idea of graph-based RAG is to retrieve relevant information (e.g., nodes, subgraphs, or textual information) from the graph, incorporate them with $Q$ as the prompt, and feed them into the LLM, as illustrated in Figure \ref{fig:dual-tasks}.
Several recent methods~\shortcite{edge2024local,sarthi2024raptor,guo2024lightrag,wu2024medical,wang2024knowledge,li2024dalk,gutierrez2024hipporag} address two common question answering (QA) tasks: abstract questions, which require reasoning over high-level themes (e.g., “What are the potential impacts of LLMs on education?”), and specific questions, which focus on entity-centric factual details (e.g., “Who won the Turing Award in 2024?”).

%


\begin{figure}[t]
    \centering
    \includegraphics[width=\linewidth]{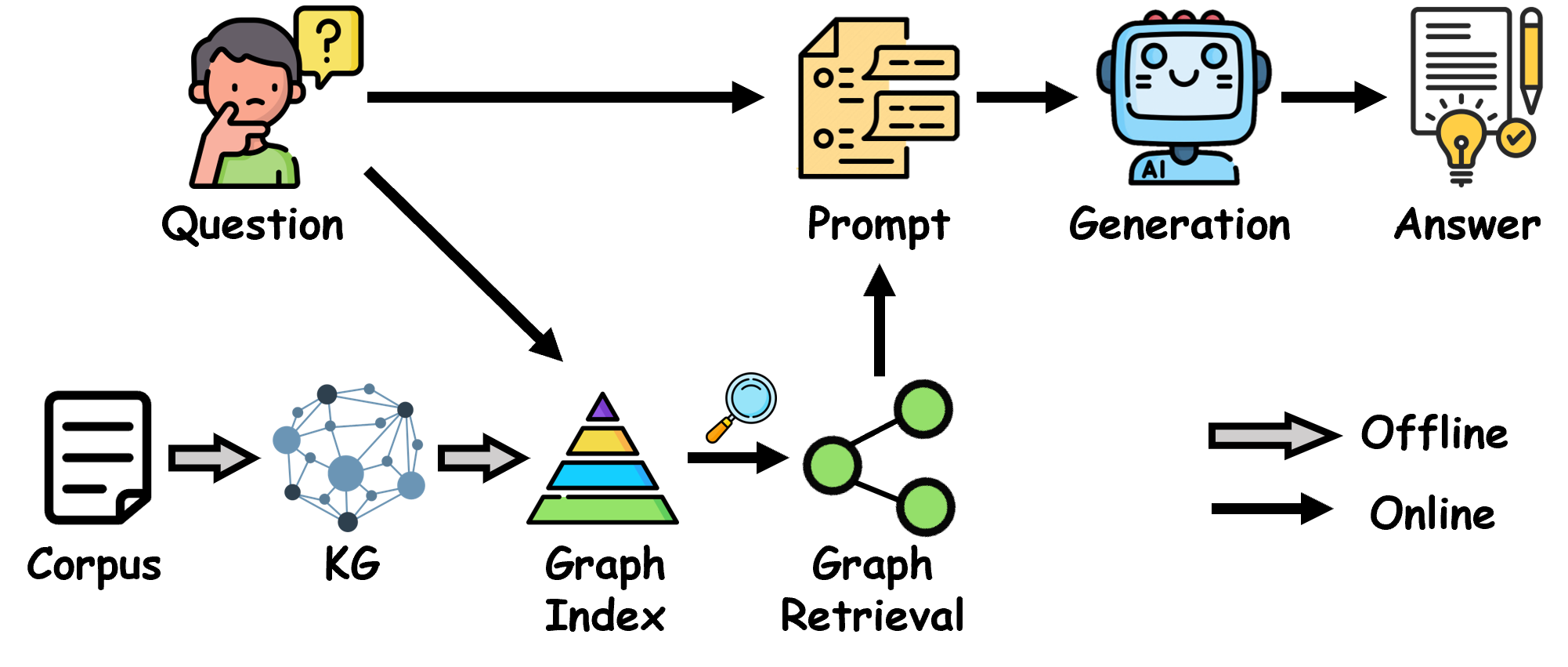}
    \caption{The general workflow of graph-based RAG, which retrieves relevant information (e.g., nodes, subgraphs, or textual information) to facilitate the LLM generation.}
    \label{fig:dual-tasks}
\end{figure}

%
In the past year, a surge of graph-based RAG methods~\shortcite{huang2025ket,wang2024knowledge,li2024dalk,gutierrez2024hipporag,ma2024think,sun2023think} has emerged, each proposing different retrieval strategies to extract detailed information for response generation.
Among them, GraphRAG~\shortcite{edge2024local}, proposed by Microsoft, is the most prominent and the first to leverage community summarization for abstract QA.
It builds a knowledge graph (KG) from the external corpus, detects communities using Leiden~\shortcite{traag2019louvain}, and generates a summary for each community using LLMs. 
For abstract questions that require high-level information, it adopts a Global Search approach, traversing all communities and using LLMs to retrieve the most relevant summaries.
In contrast, for specific questions, it employs a Local Search method to retrieve entities, relevant text chunks, and low-level communities, providing the multi-hop detailed information for accurate answers.

Although some methods claim that GraphRAG underperforms and is difficult to apply in practice~\shortcite{ma2024think,zhang2024sirerag}, our re-examination shows that it is primarily constrained by the following limitations (\textbf{L}):
\textbf{L1.}
\textit{Low community quality}: GraphRAG uses the Leiden~\shortcite{traag2019louvain} algorithm to detect communities, but this approach relies solely on graph structure and ignores the rich semantics of nodes and edges.
As a result, the detected communities often consist of different themes, which leads to the poor quality of community summaries and further decreases its performance.
\textbf{L2.}
\textit{Limited compatibility}:
While GraphRAG employs Global and Local Search strategies, each retrieves graph elements at only one granularity, making it inadequate for simultaneously addressing both abstract and specific questions and limiting its applicability in real-world open-ended scenarios.
\textbf{L3.} \textit{High generation cost}:
Although GraphRAG performs well on abstract questions, analyzing all communities with LLMs is both time- and token-consuming.
For example, GraphRAG detects 2,984 communities in the Multihop-RAG~\shortcite{tang2024multihop} dataset, and answering just 100 questions incurs a cost of approximately \$650 and 106 million tokens\footnote{The cost of GPT-4o is \$10/M tokens for output and \$2.50/M tokens for input (for details, please refer to OpenAI pricing).}, which is an impractical overhead.


To tackle the above limitations of GraphRAG, in this paper, we propose a novel graph-based RAG approach, called \textbf{\underline{A}}tt\textbf{\underline{r}}ibuted \textbf{\underline{C}}ommunity-based \textbf{\underline{H}}ierarchical \textbf{\underline{RAG}} (ArchRAG).
ArchRAG leverages attributed communities (ACs) and introduces an efficient hierarchical retrieval strategy to adaptively support both abstract and specific questions.
To mitigate \textbf{L1}, we detect high-quality ACs by exploiting both links and the attributes of nodes, ensuring that each AC comprises nodes that are not only densely connected but also share similar semantic themes~\shortcite{zhou2009graph}.
We further propose a novel LLM-based iterative framework for hierarchical AC detection, which can incorporate any existing community detection methods~\shortcite{zhou2009graph,traag2019louvain,von2007tutorial,xu2007scan,grover2016node2vec}.
In each iteration, we detect ACs based on both attribute similarity and connectivity, summarize each AC using an LLM, and construct a higher-level graph by treating each AC as a node, connecting pairs with similar summaries.
By iterating the above steps multiple times, we obtain a set of ACs that can be organized into a hierarchical tree structure.

To effectively address \textbf{L2}, we organize all ACs and entities into a hierarchical index and retrieve relevant elements from all levels to support both abstract and specific questions.
Entities offer fine-grained details, while LLM-generated AC summaries capture relational structures and provide high-level condensed overviews~\shortcite{sarthi2024raptor,angelidis2018summarizing}, making them suitable for both multi-hop reasoning and abstract insight extraction.
To support efficient retrieval across levels, we propose C-HNSW (Community-based HNSW), a novel hierarchical index inspired by the HNSW algorithm~\shortcite{malkov2018efficient} for approximate nearest neighbor (ANN) search.
%

To mitigate the high generation cost caused by traversing all communities (\textbf{L3}), we propose a hierarchical search with adaptive filtering to efficiently select the most relevant ACs and entities while maintaining performance.
Specifically, we design an efficient hierarchical retrieval algorithm over the proposed C-HNSW index, which supports top-$k$ nearest neighbor search across multiple levels, thereby facilitating access to multi-level relevant information.
Furthermore, the adaptive filtering mechanism identifies the most informative results at each level, making the retrieved information complementary.

We have extensively evaluated ArchRAG on real-world datasets, and the results show that it consistently outperforms existing methods in both abstract and specific QA tasks.
ArchRAG achieves a 10\% higher accuracy than state-of-the-art graph-based RAG methods on specific questions and shows notable gains on abstract QA.
%
Moreover, ArchRAG is very token-efficient, saving up to 250 times the token usage compared to GraphRAG~\shortcite{edge2024local}.

In summary, our main contributions are as follows:
\begin{itemize}
    \item We present a novel graph-based RAG approach by using ACs that are organized hierarchically and detected by an LLM-based hierarchical clustering method.
    
    \item To index ACs, we propose a novel hierarchical index structure called C-HNSW and also develop an efficient online retrieval method.
    
    \item Extensive experiments show that ArchRAG is both highly effective and efficient, and achieves state-of-the-art performance on both abstract and specific QA tasks.
\end{itemize}

\begin{figure*}
    \centering
    \includegraphics[width=1\linewidth]{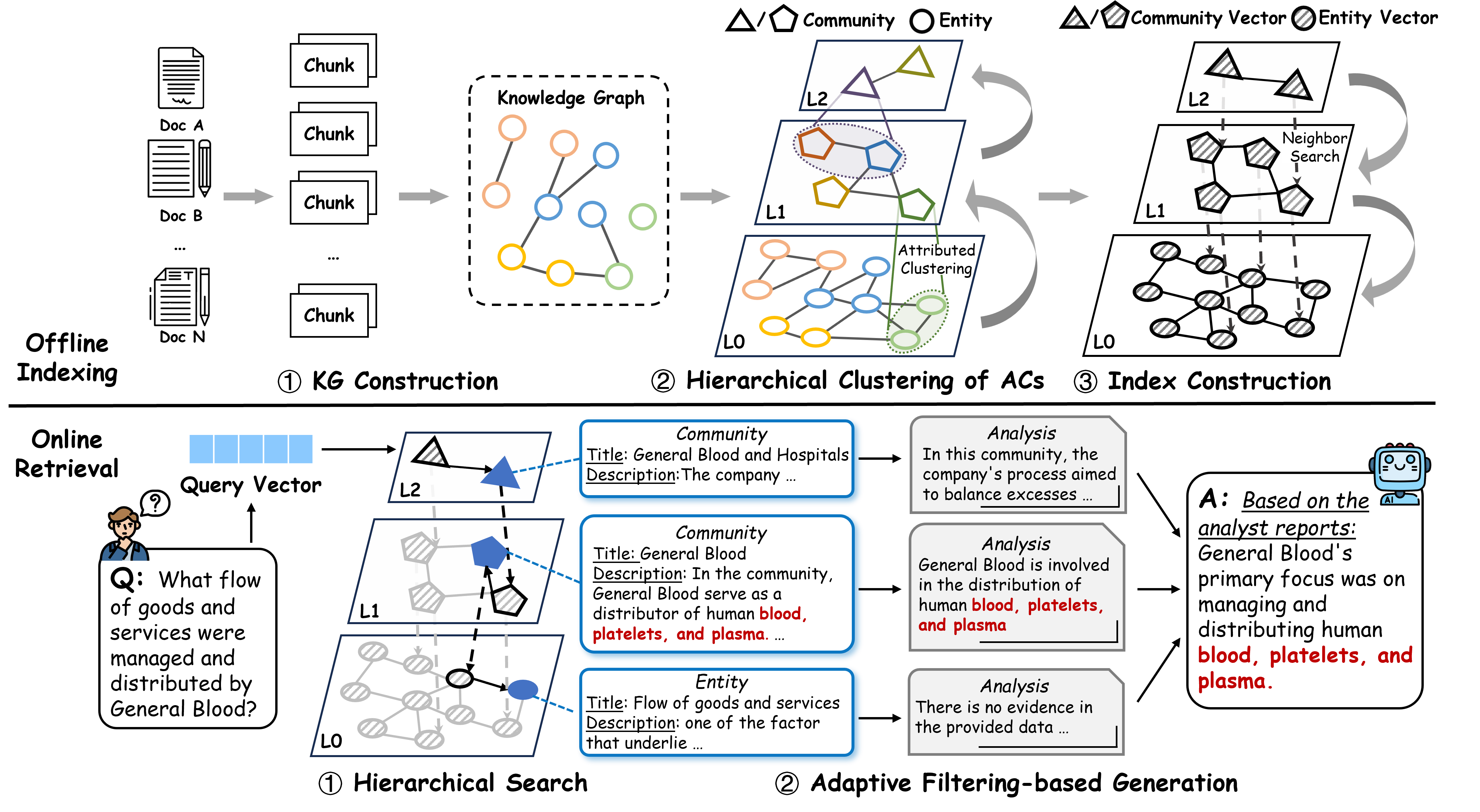}
    \caption{ArchRAG consists of two phases: offline indexing and online retrieval. For the online retrieval phase, we show an example of using ArchRAG to answer a question in the HotpotQA dataset.}
    \label{fig:overview}
\end{figure*}

\section{Related Work}
\label{sec:related-work}

In this section, we review the related works, including Retrieval-Augmentation-Generation (RAG) approaches, and LLMs for graph mining and learning.

$\bullet$ \textbf{RAG approaches.} RAG has been proven
to excel in many tasks, including open-ended question answering \shortcite{jeong2024adaptive,siriwardhana2023improving}, programming context \shortcite{chen2024auto,chen2023haipipe}, SQL rewrite~\shortcite{lillm,sun2024r}, and data cleaning~\shortcite{naeem2024retclean,narayan2022can,qian2024unidm}.
The naive RAG technique relies on retrieving query-relevant contexts from external knowledge bases to mitigate the ``hallucination'' of LLMs.
Recently, many RAG approaches~\shortcite{huang2025ket,guo2024lightrag,gutierrez2024hipporag,sarthi2024raptor,wang2024knowledge,he2024g,ma2024think,li2024dalk} have adopted graph structures to organize the information and relationships within documents, leading to improved performance.
For more details, please refer to the recent survey of graph-based RAG methods~\shortcite{peng2024graph}.

$\bullet$ \textbf{LLM for graph mining. } 
Recent advances in LLMs have offered opportunities to leverage LLMs in graph mining. 
These include using LLMs for KG construction~\shortcite{zhu2024llms}, addressing complex graph mining tasks~\shortcite{chen2024graphwiz,tang2024grapharena,cao2024graphinsight,zhang2024gcoder}, and employing KG to enhance the LLM reasoning~\shortcite{wang2023knowledge,wang2024knowledge,luo2023reasoning,sun2023think,mavromatis2024gnn,jiang2023structgpt,huang2025ket}.
%
For instance, RoG \shortcite{luo2023reasoning} proposes a planning-retrieval-reasoning framework that retrieves reasoning paths from KGs to guide LLMs conducting faithful reasoning.
%
%
StructGPT \shortcite{jiang2023structgpt} and ToG~\shortcite{sun2023think} treat LLMs as agents that interact with KGs to find reasoning paths leading to the correct answers.
%

\section{Our Approach ArchRAG}
\label{sec:ArchRAG}

We begin by presenting the overall workflow and design rationale of ArchRAG, followed by detailed descriptions of each component.
As illustrated in Figure~\ref{fig:overview}, our proposed ArchRAG consists of two phases. 
In the {\it offline indexing} phase, ArchRAG first constructs a KG from the corpus, then detects ACs by a novel LLM-based hierarchical clustering method, and finally builds the C-HNSW index.
During the {\it online retrieval} phase, ArchRAG first converts the question into a query vector, then retrieves relevant information from the C-HNSW index, and finally generates answers through an adaptive filtering-based generation process

Our ArchRAG detects ACs by exploiting both links and attributes, and organizes ACs and entities into a novel hierarchical index, C-HNSW, yielding the following advantages:
1) Each AC group densely connected entities with shared themes and a high-quality summary.
2) The hierarchical structure captures multiple levels of abstraction: lower-level entities and communities encode detailed KG knowledge, while higher-level communities provide global context, enabling ArchRAG to address questions at varying granularity.
and 3) The C-HNSW index efficiently retrieves relevant information across levels, supporting fast and accurate responses to both abstract and specific questions.

\subsection{Offline Indexing}


\subsubsection{KG construction.}
ArchRAG builds a KG by prompting the LLM to extract entities and relations from each chunk of the text corpus $D$.
Specifically, all text contexts are segmented into chunks based on specified chunk length, enabling the LLM to extract entities and relations from each chunk using in-context learning~\shortcite{brown2020language}, thus forming subgraphs.
These subgraphs are then merged, with entities and relations that appear repeatedly across multiple subgraphs being consolidated by the LLM to generate a complete description.
Finally, we get a KG, denoted by $G(V,E)$, where $V$ and $E$ are sets of vertices and edges, respectively, and each vertex and edge is associated with textual attributes.

\subsubsection{LLM-based hierarchical clustering. } 
\label{sec:llm_cluster}
%
%
We propose an iterative LLM-based hierarchical clustering framework that supports arbitrary graph augmentation (e.g., KNN connections and CODICIL~\shortcite{CODICIL2013efficient}) and graph clustering algorithms (e.g., weighted Leiden \shortcite{traag2019louvain}, weighted spectral clustering, and SCAN \shortcite{xu2007scan}).
Specifically, we propose to augment the KG by linking entities if their attribute similarities are larger than a threshold, and then associate each pair of linked entities with a weight denoting their attribute similarity value.
Next, we generate the ACs using any given graph clustering algorithm.
In this way, both node attributes and structural links are jointly considered during community detection.

Algorithm \ref{alg:LHC} shows the above iterative clustering process.
Given a graph augmentation method $Aug$, clustering algorithm $GCluster$ and stopping condition $T$, we perform the following steps in each iteration:
(1) augmenting the graph (line 3);
(2) computing the edge weights (lines 4-5);
(3) clustering the augmented graph (line 6);
(4) generating a summary for each community using LLM (line 7);
and (5) building a new attributed graph where each node denotes an AC and two nodes are linked if their community members are connected (line 9).
We repeat the iterations until the stopping condition $T$ (such as insufficient nodes or reaching the specified level limit) is met.
Since each iteration corresponds to one layer, all the ACs $HC$ can be organized into a multi-layer hierarchical tree structure, denoted by $\Delta$, where each community in one layer includes multiple communities in the next layer.
The extended version also provides more details.

\begin{algorithm}[ht]
  \small
  \caption{LLM-based hierarchical clustering}
  \label{alg:LHC}
   \SetKwInOut{Input}{input}\SetKwInOut{Output}{output}
    \Input{${G}(V,E)$, \texttt{Aug}, \texttt{GCluster},  $T$}
    $T\gets$ {\tt False}, ${HC}\gets \emptyset$\;
    \Repeat{$T$={\tt True}}{
        ${G}'(V,E') \gets$ \texttt{Aug}(${G}(V,E)$)\;
        
        \For{ each $e'=(u,v) \in E'$}{
            update the weight of $e'$ as $1 - \cos(z_u,z_v)$\;
        }

        $C \gets$ \texttt{GCluster}(${G}'(V,E')$)\;
        \lFor{each $c \in C$}{
            generate summary of $c$ by LLM
        }
        ${HC}\gets {HC} \cup C$\;
           
        ${G}(V,E) \gets$ build a new graph using $C$ and $E'$\; \tcp{\textcolor{teal}{update $T$ according to $G(V,E)$;}}
    }
    \Return{${HC}$;}
\end{algorithm}

\subsubsection{C-HNSW index.}
Given a query, to efficiently identify the most relevant information from each layer of the hierarchical tree $\Delta$, a naive method is to build a vector database for the ACs in each layer, which is costly in both time and space.
To tackle this issue, we propose to build a single hierarchical index for all the communities.
Recall that the ACs in $\Delta$ form a tree structure, and the number of nodes decreases as the layer level increases.
Since this tree structure is similar to the HNSW (Hierarchical Navigable Small World) index which is the most well-known index for efficient ANN search \shortcite{malkov2018efficient}, we propose to map entities and ACs of $\Delta$ into high-dimensional nodes, and then build a unified Community-based HNSW (C-HNSW) index for them.

$\bullet$ \textbf{The structure of C-HNSW.}
Conceptually, the C-HNSW index is a list of simple graphs with links between them, denoted by $\mathcal{H} = ({\mathcal G},{L_{inter}})$ with $\mathcal{\mathcal G}=\{G_0=(V_0, E_0), G_1=(V_1, E_1), \cdots, G_L=(V_L, E_L)\}$, where $G_i$ is a simple graph and each node of the simple graph corresponds to an attributed community or entity.
%
%
The number $L$ of layers in $\mathcal{H}$ is the same as that of $\Delta$.
Specifically, for each attributed community or entity in the $i$-th layer of $\Delta$, we map it to a high-dimensional node in the $i$-th layer of $\mathcal H$ by using a language model (e.g., nomic-embed-text \shortcite{nussbaum2024nomic}).

We next establish two types of links between these high-dimensional nodes, i.e., {\it intra-layer} and {\it inter-layer} links:
\begin{itemize}
    \item {\bf Intra-layer links:} These links exist between nodes in the same layers. Specifically, for each node in each layer, we link it to at least $M$ nearest neighbors within the same layer, where $M$ is a predefined value, and the nearest neighbors are determined according to a given distance metric $d$.
    Thus, all the intra-layer links are edges in all the simple graphs: $L_{intra}=\bigcup_{i=0}^{L}E_i$.

    \item {\bf Inter-layer links:} These links cross two adjacent layers. Specifically, we link each node in each layer to its nearest neighbor in the next layer. 
    As a result, all the inter-layer links can be represented as $L_{inter} = \bigcup_{i=1}^{L} \{(v,\psi(v))|v \in V_i,\psi(v)\in V_{i-1} \}$, 
        where $\psi(\cdot):V_i \rightarrow V_{i-1}$ is the injective function that identifies the nearest neighbor of each node in the lower layer.
\end{itemize}

For example, in Figure \ref{fig:overview}, the C-HNSW index has three layers (simple graphs), incorporating all the ACs.
Within each layer, each node is connected to its two nearest neighbors via intra-layer links, denoted by undirected edges.
The inter-layer links are represented by arrows, e.g., the green community at layer $L_1$ is connected to the green entity (its nearest neighbor at layer $L_0$).

Intuitively, since the two types of links above are established based on nearest neighbors, C-HNSW allows us to quickly search the relevant information for a query by traversing along with these links.
Note that C-HNSW is different from HNSW since it has intra-layer links and each node exists in only one layer.

$\bullet$ \textbf{The construction of C-HNSW.}
%
%
We propose a top-down approach to build a C-HNSW.
Specifically, by leveraging the query process of C-HNSW, which will be introduced in online retrieval, nodes are progressively inserted into the index starting from the top layer, connecting intra-layer links within the same layer and updating the inter-layer links.
For lack of space, we give the details of the construction algorithm in the extended version.

\subsection{Online retrieval}


%
In the online retrieval phase, after obtaining the query vector for a given question, ArchRAG generates the final answer by first conducting a hierarchical search on the C-HNSW index and then analyzing and filtering the retrieved information.

\subsubsection{Hierarchical search.} 
\label{sec:search}
We propose an efficient and fast retrieval algorithm, hierarchical search, to retrieve nodes from each layer of the C-HNSW structure.
Intuitively, retrieving nodes from a given layer in C-HNSW requires starting from the top layer and searching downward through two types of links (i.e.,  {\it intra-layer} and {\it inter-layer} links) to locate the nearest neighbors at the given layer.
In contrast, our hierarchical search algorithm accelerates this process by reusing intermediate results, the nearest neighbors found in higher layers, as the starting node for lower layers.
This approach avoids redundant computations that would otherwise arise from repeatedly searching from the top layer, thereby enabling efficient multi-layer retrieval.
%

Algorithm \ref{alg:query} illustrates hierarchical search.
Given the C-HNSW $\mathcal{H}$, query point $q$, and the number $k$ of nearest neighbors to retrieve at each layer, the hierarchical search algorithm can be implemented by the following iterative process:

\begin{enumerate}
    \item Start from a random node at the highest layer $L$, which serves as the starting node for layer $L$ (line 1).
    
    \item For each layer $i$ from the top layer $L$ down to layer $0$, the algorithm begins at the starting node and performs a greedy traversal (i.e., the {\tt SearchLayer} procedure) to find the set $R_i$ of the $k$ nearest neighbors of $q$. The set $R_i$ is then merged into the final result set $R$ (lines 4–5).
    
    \item The closest neighbor $c$ of $q$ is then obtained from $R_i$, and the algorithm proceeds to the next layer by traversing the inter-layer link of $c$, using it as the starting node for the subsequent search (lines 6–7).
\end{enumerate}

\begin{algorithm}[ht]
  \caption{{Hierarchical search}}
  \label{alg:query}
  \small
   \SetKwInOut{Input}{input}\SetKwInOut{Output}{output}
  \SetKwFunction{FSearchLayer}{SearchLayer}
  \SetKwProg{Fn}{Procedure}{:}{}
    \Input{$\mathcal{H} = ({\mathcal G},{L_{inter}})$, $q$, $k$.}
    $s\gets $ a random node in the highest layer $L$\;
    $R\gets \emptyset$\;
    \For{$i \gets L,\cdots,0$}{
    $R_i\gets$ \FSearchLayer(${ G_l}=(V_l,E_l), q, s, k$)\;
    $R\gets R\cup R_i$\;
    $c\gets$ get the nearest node from $R_i$\;
    $s\gets$ find the node in layer $i-1$ via the inter-layer links of $c$\;
    }
    \Return{$R$;}
    
     \Fn{\FSearchLayer{${G_i}=(V_i,E_i), q, s, k$}}{
    $V\gets \{s\}$,  $K\gets \{s\}$, $Q\gets $ initialize a queue containing $s$\; 
    \While{$|K|>0$}{
        $c \gets$ nearest node in $Q$\;
        $f \gets$ furthest node in $K$\;
        \lIf{$d(c,q)>d(f,q)$}{
            {\bf break}
        } 
        \For{each neighbor $x\in N(c)$ in $G_i$}{
            \lIf{$x \in V$}{\textbf{continue}}
            $V\gets V \cup \{x\}$\;
            $f \gets$ furthest node in $K$\;
            \If{$d(x,q)<d(f,q)$\text{ or }$|K|<k$ }{
                $Q\gets Q\cup \{x\}$, $K\gets K\cup \{x\}$\;
                \lIf{$|K|>k$}{remove $f$ from $K$}
            }
        }
    }
    \Return{${K}$;}
  }
\end{algorithm}

Specifically, the greedy traversal strategy compares the distance between the query point and the visited nodes during the search process.
It maintains a candidate expansion queue $Q$ and a dynamic nearest neighbor set $K$ containing $k$ elements, along with a stopping condition:

\begin{itemize}
    \item Expansion Queue $Q$: For each neighbor $x$ of a visited node, if $d(x,q)<d(f,q)$, where $f$ is the furthest node from $R$ to $q$, then $x$ is added to the expansion queue.
    \item Dynamic Nearest Neighbor Set $K$: Nodes added to $C$ are used to update $K$, ensuring that it maintains no more than $k$ elements, where $k$ is the number of query results.
    \item Stopping Condition: The traversal terminates if a node $x$ expanded from $Q$ satisfies $d(n,q) > d(n,f)$, where $f$ is the furthest node in $K$ from the query point $q$.
\end{itemize}

After completing the hierarchical search and obtaining the ACs and entities from each layer, we further extract their associated textual information.
In particular, at the bottom layer, we also extract the relationships between the retrieved entities, resulting in the textual subgraph representation denoted as $R_0$.
We denote all the retrieved textual information from each layer as $R_i$, where $i \in {0,1,\dots, L}$, which will be used in the adaptive filtering-based generation process.


%

\subsubsection{Adaptive filtering-based generation.}
While some optimized LLMs support longer text inputs, they may still encounter issues such as the ``lost in the middle'' dilemma \cite{liu2024lost}. 
Thus, direct utilization of retrieved information comprising multiple text segments for LLM-based answer generation risks compromising output accuracy.

To mitigate this limitation, we propose an adaptive filtering-based method that harnesses the LLM’s inherent reasoning capabilities.
We first prompt the LLM to extract and generate an analysis report from the retrieved information, identifying the parts that are most relevant to answering the query and assigning relevance scores to these reports.
Then, all analysis reports are integrated and sorted, ensuring that the most relevant content is used to summarize the final response to the query, with any content exceeding the text limit being truncated.
This process can be represented as:
\begin{align}
A_i &= LLM(P_{filter}||R_i) \\
Output&=LLM(P_{merge}||Sort(\{A_0,A_1,\cdots,A_n\}))
\end{align}
where $P_{filter}$ and $P_{merge}$ represent the prompts for extracting relevant information and summarizing, respectively, 
$A_i$, $i \in 0 \cdots n$ denotes the filtered analysis report.
The sort function orders the content based on the relevance scores from the analysis report.

\section{Experiments}
\label{sec:exp}




\subsection{Setup}

\begin{figure*}[ht]
\centering
\setlength{\belowcaptionskip}{-0.34cm}
\small
\renewcommand{\arraystretch}{1.1} 
\Large
\subfigure[Comprehensiveness]{
    \resizebox{0.23\textwidth}{!}{ 
    \begin{tabular}{c*{5}{c}}
  & VR & LR & C1 & C2 & AR \\ 
{VR} & \colorcell{50} & \colorcell{46} & \colorcell{18} & \colorcell{18} & \colorcell{1} \\ 
{LR} & \colorcell{54} & \colorcell{50} & \colorcell{21} & \colorcell{29} & \colorcell{16} \\ 
{C1} & \colorcell{82} & \colorcell{79} & \colorcell{50} & \colorcell{86} & \colorcell{18} \\ 
{C2} & \colorcell{82} & \colorcell{71} & \colorcell{14} & \colorcell{50} & \colorcell{16} \\ 
{AR} & \colorcell{99} & \colorcell{84} & \colorcell{82} & \colorcell{84} & \colorcell{50} \\ 
    \end{tabular}
    }
}
\subfigure[Diversity]{
    \resizebox{0.23\textwidth}{!}{ 
    \begin{tabular}{c*{5}{c}}
  & VR & LR & C1 & C2 & AR \\ 
{VR} & \colorcell{50} & \colorcell{64} & \colorcell{3} & \colorcell{12} & \colorcell{8} \\ 
{LR} & \colorcell{36} & \colorcell{50} & \colorcell{52} & \colorcell{63} & \colorcell{33} \\ 
{C1} & \colorcell{97} & \colorcell{48} & \colorcell{50} & \colorcell{52} & \colorcell{46} \\ 
{C2} & \colorcell{88} & \colorcell{37} & \colorcell{48} & \colorcell{50} & \colorcell{42} \\ 
{AR} & \colorcell{92} & \colorcell{67} & \colorcell{54} & \colorcell{58} & \colorcell{50} \\ 
    \end{tabular}
    }
}
\subfigure[Empowerment]{
    \resizebox{0.23\textwidth}{!}{ 
    \begin{tabular}{c*{5}{c}}
  & VR & LR & C1 & C2 & AR \\ 
{VR} & \colorcell{50} & \colorcell{39} & \colorcell{15} & \colorcell{21} & \colorcell{8} \\ 
{LR} & \colorcell{61} & \colorcell{50} & \colorcell{59} & \colorcell{30} & \colorcell{35} \\ 
{C1} & \colorcell{85} & \colorcell{41} & \colorcell{50} & \colorcell{60} & \colorcell{42} \\ 
{C2} & \colorcell{79} & \colorcell{70} & \colorcell{40} & \colorcell{50} & \colorcell{38} \\ 
{AR} & \colorcell{92} & \colorcell{65} & \colorcell{58} & \colorcell{62} & \colorcell{50} \\ 
    \end{tabular}
    }
}
\subfigure[Overall]{
    \resizebox{0.23\textwidth}{!}{ 
    \begin{tabular}{c*{5}{c}}
 & VR & LR & C1 & C2 & AR \\ 
{VR} & \colorcell{50} & \colorcell{46} & \colorcell{14} & \colorcell{18} & \colorcell{4} \\ 
{LR} & \colorcell{54} & \colorcell{50} & \colorcell{48} & \colorcell{31} & \colorcell{22} \\ 
{C1} & \colorcell{86} & \colorcell{52} & \colorcell{50} & \colorcell{70} & \colorcell{31} \\ 
{C2} & \colorcell{82} & \colorcell{69} & \colorcell{30} & \colorcell{50} & \colorcell{30} \\ 
{AR} & \colorcell{96} & \colorcell{78} & \colorcell{69} & \colorcell{70} & \colorcell{50} \\ 
    \end{tabular}
    }
}
\caption{Head-to-head win rates for abstract QA, comparing each row method against each column (higher is better).
VR, LR, and AR denote Vanilla RAG, HyLightRAG, and ArchRAG, respectively.}
\label{fig:winmap}
\end{figure*}

\begin{table}[tbp]
    \centering
    \small
    \begin{tabular}{c|rrr}
        \toprule
        Dataset & Multihop-RAG & HotpotQA & NarrativeQA  \\
        \midrule
        Passages & 609 & 9,221 & 1,572 \\
        Tokens & 1,426,396  & 1,284,956 & 121,152,448 \\
        Nodes & 23,353 & 37,436 & 650,571  \\
        Edges & 30,716 & 30,758 & 679,426 \\
        Questions & 2,556  & 1,000 & 43,304  \\
        Metrics & Acc, Rec & Acc, Rec & Blue, Met, Rou \\
        \bottomrule
    \end{tabular}
\caption{Datasets used in our experiments. Acc, Rec, Blue, Met, and Rou denote Accuracy, Recall, BLEU-1, METEOR, and ROUGE-L F1.
}
    \label{tab:dataset}
\end{table}

\paragraph{Datasets.}
We evaluate ArchRAG on both specific and abstract QA tasks.
For specific QA, we use Multihop-RAG~\shortcite{tang2024multihop}, HotpotQA \shortcite{yang2018hotpotqa}, and NarrativeQA \shortcite{kovcisky2018narrativeqa},
all of which are extensively utilized within the QA and Graph-based RAG research communities \shortcite{hu2022empowering,gutierrez2024hipporag,yao2022react,xu2024unsupervised,sarthi2024raptor,asai2023self}.
%
For abstract QA, we follow the GraphRAG \shortcite{edge2024local} method and reuse the Multihop-RAG corpus, prompting LLM to generate questions that convey a high-level understanding of dataset contents.
The statistics of these datasets are reported in Table \ref{tab:dataset}.

\paragraph{Baselines.}
Our experiments consider three configurations:
\begin{itemize}[left=0pt]
\item \textbf{Inference-only:} Using an LLM to answer questions without retrieval, i.e., {Zero-Shot} and {CoT} \shortcite{kojima2022large}.

\item \textbf{Retrieval-only:} Retrieval models extract relevant chunks from all documents and use them as prompts for LLMs. We select strong and widely used retrieval models: {BM25} \shortcite{robertson1994some} and {Vanilla RAG}.

\item \textbf{Graph-based RAG:} These methods leverage graph data during retrieval. We select RAPTOR~\shortcite{sarthi2024raptor}, {HippoRAG} \shortcite{gutierrez2024hipporag}, {GraphRAG}~\shortcite{edge2024local}, and {LightRAG}~\shortcite{guo2024lightrag}.
Particularly, {GraphRAG} has two versions, i.e., {GGraphRAG} and {LGraphRAG}, which use global and local search methods, respectively.
Similarly, {LightRAG} integrates local search, global search, and hybrid search, denoted by {LLightRAG}, {HLightRAG}, and {HyLightRAG}, respectively.
\end{itemize}

In {GGraphRAG}, all communities below the selected level are first retrieved, and then the LLM is used to filter out irrelevant communities.
This process can be viewed as utilizing the LLM as a {\it retriever} to find relevant communities within the corpus.
According to the selected level of communities~\shortcite{edge2024local}, {GGraphRAG} can be further categorized into {C1} and {C2}, representing high-level and intermediate-level communities, respectively, with {C2} as the default.
%

\paragraph{Metrics \& Implementation.}
For the specific QA tasks, we use Accuracy and Recall to evaluate performance on the first two datasets based on whether gold answers are included in the generations instead of strictly requiring exact matching, following \shortcite{schick2024toolformer,mallen2022not,asai2023self}.
We also use the official metrics of BLEU, METEOR, and ROUGE-L F1 in the NarrativeQA dataset.
For the abstract QA task, we follow prior work \shortcite{edge2024local} and adopt a head-to-head comparison approach using an LLM evaluator (GPT-4o).
Overall, we utilize four evaluation dimensions: Comprehensiveness, Diversity, Empowerment, and Overall. 
For implementation, we mainly use Llama 3.1-8B \shortcite{dubey2024llama} as the default LLM and use nomic-embed-text \shortcite{nussbaum2024nomic} as the text embedding model.
We use KNN for graph augmentation and the weighted Leiden algorithm for community detection.
For retrieval item $k$, we search the same number of items at each layer, with $k=5$ as the default.
All methods are required to complete index construction and query execution within 3 days, respectively.

\definecolor{c1}{RGB}{216,174,174} 
\definecolor{c2}{RGB}{224,175,107}  
\definecolor{c3}{RGB}{222,117,123} 
\definecolor{c4}{RGB}{250, 221, 104} 

\definecolor{c5}{RGB}{174,223,172} 

\definecolor{c9}{RGB}{190,150,210} 

\definecolor{c6}{RGB}{138,170,214}  
\definecolor{c11}{RGB}{172, 196, 226} 
\definecolor{c12}{RGB}{208, 221, 238} 

\definecolor{c7}{RGB}{248,199,1} 
\definecolor{c21}{RGB}{163,137,214} 

\definecolor{c8}{RGB}{255,0,127} 



\definecolor{c10}{RGB}{230,189,69} %

\definecolor{c13}{RGB}{115,107,157} %
\definecolor{c14}{RGB}{208,108,157} %

\pgfplotstableread[row sep=\\,col sep=&]{
datasets & CSH & CSSH & SSCS & SSCSMP & Ecore & Vcore & SCANS \\ 
2 & 3.132 & 3.5 & 3.284 & 2.267 & 2.45 & 2.075 & 7.317 \\
4 & 8.327 & 9.288 & 3.143 & 2.363 & 3.381 & 3.122 & 4.066 \\
6 & 12.593 & 6.744 & 1.581 & 1.619 & 10.857 & 10.833 & 5.621 \\
}\DIACom
\pgfplotstableread[row sep=\\,col sep=&]{
datasets & CSH & CSSH & SSCS & SSCSMP & Ecore & Vcore & SCANS \\ 
2 & 0.57 & 0.629 & 0.889 & 0.898 & 0.85 & 0.64 & 0.266 \\
4 & 0.137 & 0.153 & 0.778 & 0.857 & 0.335 & 0.36 & 0.211 \\
6 & 0.536 & 0.708 & 0.822 & 0.825 & 0.425 & 0.594 & 0.568 \\
}\CCcom

\pgfplotstableread[row sep=\\,col sep=&]{
datasets & zero & cot & BM25 & vanilla & raptor & hippo & lightl & lighth & lighthy & graphl & graphg & ArchRAG \\ 
2 & 436.93 & 988.13 & 2590.56 & 3042.43 & 3400.56 & 2359.64 & 5193.792 & 7152.56 & 7120.468 & 22118.98 & 246373.92 & 8890.07 \\ 
4 & 211.26 & 488.74 & 351.49 & 332.37 & 0 & 1392.55 & 1673.48 & 1852.6 & 2843.442 & 11457.59 & 123030.84 & 2024.94 \\ 
}\Qtime
\pgfplotstableread[row sep=\\,col sep=&]{
datasets & zero & cot & BM25 & vanilla & raptor & hippo & lightl & lighth & lighthy & graphl & graphg & ArchRAG \\ 
2 & 426105 & 795941 & 5594336 & 5556774 & 8468794 & 6121633 & 14617559 & 15467710 & 18515561 & 1536156 & 2726371287 & 12575315 \\ 
4 & 89649 & 172217 & 675569 & 203654 & 0 & 886793 & 3088800 & 3223300 & 4817210 & 601000 & 1394436159 & 5191529 \\ 
}\Qtoken

\pgfplotstableread[row sep=\\,col sep=&]{
datasets & raptor & Hippo & light & GraphRAG & ArchRAG \\ 
2 & 4200 & 3370 & 6689.42 & 7591 & 9320.95 \\
4 & 100000 & 4521 & 18963.89 & 16118 & 14383.94 \\
}\Indextime
\pgfplotstableread[row sep=\\,col sep=&]{
datasets & raptor  & Hippo & light & GraphRAG & ArchRAG \\ 
2 & 4542276 & 3044748 & 12537458 & 12212572 & 13328928 \\
4 & 1000000 & 8023838 & 52436836 & 29140517 & 35248693 \\
}\Indextoken

\begin{table*}[ht]
    \centering
    \small
    \begin{tabular}{llccccccc}
        \toprule
        \multirow{2}{*}{Baseline Type} & \multirow{2}{*}{Method} & \multicolumn{2}{c}{Multihop-RAG} & \multicolumn{2}{c}{HotpotQA} & \multicolumn{3}{c}{NarrativeQA} \\
        \cmidrule(lr){3-4} \cmidrule(lr){5-6} \cmidrule(lr){7-9}
        & &(Accuracy) & (Recall) & (Accuracy) & (Recall) & (BLEU-1) & (METEOR) & (ROUGE-L F1)\\
\midrule
\multirow{2}{*}{Inference-only} & {Zero-shot} & 47.7 & 23.6 & 28.0 & 31.8 & \underline{8.0} & 7.9 & 8.6 \\
& {CoT} & 54.5 & 28.7 & 32.5 & 39.7 & 5.0 & 8.1 & 6.4 \\
\midrule
\multirow{2}{*}{Retrieval-only} 
& {BM25} & 37.6 & 19.4 & 49.7 & 53.6 & 2.0 & 4.9  & 2.8 \\
& {Vanilla RAG} & 58.6 & 31.4 & 50.6 & 56.1 & 2.0 & 4.9 & 2.8 \\
\midrule
\multirow{6}{*}{Graph-based RAG} 
& {RAPTOR} & \underline{59.1} & \underline{34.1} & N/A & N/A & 5.5 & \underline{12.5} & \underline{9.1} \\
& {HippoRAG} & 38.9 & 19.1 & \underline{51.3} & \underline{56.8} & 2.2 & 5.0 & 2.8 \\
& {LLightRAG} & 44.1 & 25.1 & 34.1 & 41.8 & 4.5 & 8.7 & 6.6 \\
& {HLightRAG} & 48.5 & 28.7 & 25.6 & 33.3 & 4.4 & 8.1 & 6.1 \\
& {HyLightRAG} & 50.3 & 30.3 & 35.6 & 43.3 & 5.0 & 9.4 & 7.0 \\
& {LGraphRAG} & 40.1 & 23.8 & 29.7 & 35.5 & 3.9 & 3.3 & 3.5 \\
& {GGraphRAG}& 45.9 & 28.4 & 33.5 & 42.6 & OOT & OOT & OOT \\
\midrule
Our proposed & {ArchRAG} & {\bf 68.8} & {\bf 37.2 }& {\bf 65.4} & {\bf 69.2} & {\bf 11.5} & {\bf 15.6} & {\bf 17.6} \\
        \bottomrule
    \end{tabular}
    \caption{Performance comparison of different methods across various datasets for solving specific QA tasks. The best and second-best results are marked in bold and underlined. 
    %
    OOT: Not finished within 3 days.}
    \label{tab:docqa}
\end{table*}

\subsection{Overall results}
\label{sec:overallExp}

We compare our method with baseline methods in solving both abstract and specific QA tasks.

$\bullet$ \textbf{Results of abstract QA tasks.}
We compare {ArchRAG} against baselines across four dimensions on the Multihop-RAG dataset.
For the LightRAG, we only compare the {HyLightRAG} method, as it represents the best version~\shortcite{guo2024lightrag}.
As shown in Figure \ref{fig:winmap}, GGraphRAG outperforms other baseline methods, while our method achieves comparable performance on the diversity and empowerment dimensions and significantly surpasses it on the comprehensive dimension.
Overall, by leveraging ACs, ArchRAG demonstrates superior performance in addressing abstract QA tasks.

$\bullet$ \textbf{Results of specific QA tasks.}
Table \ref{tab:docqa} reports the performance of each method on three datasets.
Note that GGraphRAG fails to complete querying on the NarrativeQA dataset within the 3-day time limit.
RAPTOR is unable to build the index on datasets like HotpotQA, which contains a large number of text chunks. 
Its Gaussian Mixture Model (GMM) clustering algorithm requires prohibitive computational time and suffers from non-termination issues during clustering.
Clearly, {ArchRAG} demonstrates a substantial performance advantage over other baseline methods on these datasets.
The experimental results suggest that not all communities are suitable for specific QA tasks, as the {GGraphRAG} performs poorly. 
Furthermore, GraphRAG does not consider node attributes during clustering, which causes the community's summary to become dispersed, making it difficult for the LLM to extract relevant information from a large number of communities.
Thus, we gain an interesting insight: {\it LLM may not be a good retriever, but is a good analyzer.}
%
We further analyze the reasons behind the underperformance of each graph-based RAG method and support our claims with empirical evidence in the extended version.

\begin{figure}[tb]
    \centering
    \setlength{\belowcaptionskip}{-0.2cm}
    \quad \ref{eff_leg}\\
    \subfigure[Time cost]{
    \begin{tikzpicture}[scale=0.45]
            \begin{axis}[
                ybar=0.5pt,
                bar width=0.6cm,
                width=\textwidth,
                height=0.26\textwidth,
                xtick=data,
                xticklabels={\huge Multihop-RAG,\huge HotpotQA},
                legend style={
                 at={(0.0,0.78)}, 
    anchor=south west, 
                    align=left, 
                legend columns=4,
                draw=none},
                legend image code/.code={
                    \draw [#1, line width=0.5pt] (0cm,-0.1cm) rectangle (0.2cm,0.2cm); },
                legend to name=eff_leg,
                xmin=1,xmax=5,
                ymin=100,ymax=1000000,
                ytick = {100, 1000, 10000, 100000},
                yticklabels = {$10^{2}$, $10^{3}$, $10^{4}$, $10^{5}$},
                ymode = log,
                tick align=inside,
                ticklabel style={font=\Huge},
                every axis plot/.append style={line width = 2.5pt},
                every axis/.append style={line width = 2.5pt},
                ylabel={\textbf{\huge time (s)}}
                ]
\addplot[fill=c1] table[x=datasets,y=zero]{\Qtime};
\addplot[fill=c2] table[x=datasets,y=cot]{\Qtime};
\addplot[fill=c3] table[x=datasets,y=BM25]{\Qtime};
\addplot[fill=c4] table[x=datasets,y=vanilla]{\Qtime};
\draw[c9] (axis cs:3.77,100) node[rotate=90, anchor=west] {\Huge \bf N/A};
\addplot[fill=c9] table[x=datasets,y=raptor]{\Qtime};
\addplot[fill=c5] table[x=datasets,y=hippo]{\Qtime};
\addplot[fill=c6] table[x=datasets,y=lightl]{\Qtime};
\addplot[fill=c11] table[x=datasets,y=lighth]{\Qtime};
\addplot[fill=c12] table[x=datasets,y=lighthy]{\Qtime};
\addplot[fill=c7] table[x=datasets,y=graphl]{\Qtime};
\addplot[fill=c21] table[x=datasets,y=graphg]{\Qtime};
\addplot[fill=c8] table[x=datasets,y=ArchRAG]{\Qtime};

\legend{\small {Zero-Shot},\small {CoT},\small {BM25}, \small {Vanilla RAG}, \small {RAPTOR}, \small {HippoRAG}, \small {LLightRAG}, \small {HLightRAG}, \small {HyLightRAG}, \small {LGraphRAG}, \small {GGraphRAG}, \small {ArchRAG}}
            \end{axis}
        \end{tikzpicture}
        \label{fig:query-time}
    }
    \subfigure[Token cost]{
		\begin{tikzpicture}[scale=0.45]
            \begin{axis}[
                ybar=0.5pt,
                bar width=0.6cm,
                width=\textwidth,
                height=0.26\textwidth,
                xtick=data,
                xticklabels={\huge Multihop-RAG,\huge HotpotQA},
                xmin=1,xmax=5,
                ymin=50000,ymax=7000000000,
                ytick = {1000000, 10000000, 100000000, 1000000000},
                yticklabels = {$10^{0}$, $10^{1}$, $10^{2}$, $10^{3}$},
                ymode = log,
                tick align=inside,
                ticklabel style={font=\Huge},
                every axis plot/.append style={line width = 2.5pt},
                every axis/.append style={line width = 2.5pt},
                ylabel={\textbf{\huge token (M)}}
                ]

\addplot[fill=c1] table[x=datasets,y=zero]{\Qtoken};
\addplot[fill=c2] table[x=datasets,y=cot]{\Qtoken};
\addplot[fill=c3] table[x=datasets,y=BM25]{\Qtoken};
\addplot[fill=c4] table[x=datasets,y=vanilla]{\Qtoken};
\addplot[fill=c9] table[x=datasets,y=raptor]{\Qtoken};
\draw[c9] (axis cs:3.77,50000) node[rotate=90, anchor=west] {\Huge \bf N/A};
\addplot[fill=c5] table[x=datasets,y=hippo]{\Qtoken};
\addplot[fill=c6] table[x=datasets,y=lightl]{\Qtoken};
\addplot[fill=c11] table[x=datasets,y=lighth]{\Qtoken};
\addplot[fill=c12] table[x=datasets,y=lighthy]{\Qtoken};
\addplot[fill=c7] table[x=datasets,y=graphl]{\Qtoken};
\addplot[fill=c21] table[x=datasets,y=graphg]{\Qtoken};
\addplot[fill=c8] table[x=datasets,y=ArchRAG]{\Qtoken};
            \end{axis}
        \end{tikzpicture}
        \label{fig:query-token}
	}
    \caption{Comparison of query efficiency.}
    \label{fig:query-efficiency}
\end{figure}

$\bullet$ \textbf{Efficiency of {ArchRAG}.}
We compare the time cost and token usage of {ArchRAG} with those of other baseline methods.
As shown in Figure \ref{fig:query-efficiency}, {ArchRAG} demonstrates significant time and cost efficiency for online queries.
For example, token usage on the HotpotQA dataset is cut by $\bf 250\times$ with ArchRAG compared to GraphRAG-Global, from 1,394M tokens down to 5.1M tokens.

To further evaluate ArchRAG, we test the {\bf efficiency of hierarchical search}, {\bf indexing performance}, and {\bf effectiveness on an additional dataset}, RAG-QA Arena ~\shortcite{han2024rag}.
Results show that ArchRAG achieves up to $5.4\times$ faster retrieval than basic HNSW, maintains efficient indexing, and achieves state-of-the-art performance on the RAG-QA Arena dataset.
Additional details are provided in the extended version.


\subsection{Detailed Analysis}



To better understand the effectiveness of our proposed ArchRAG, we perform the following ablation study and experiment with a GGraphRAG variant.

\begin{table}[tb]
    \centering
    \small
    \begin{tabular}{lcccc}
        \toprule
\multirow{2}{*}{Method variants} &  \multicolumn{2}{c}{Multihop-RAG} & \multicolumn{2}{c}{HotpotQA} \\
\cmidrule(lr){2-3} \cmidrule(lr){4-5}
        & (Acc) & (Rec) & (Acc) & (Rec)\\
\midrule
{ ArchRAG}  & 68.8 & 37.2 & 65.4 & 69.2 \\
\footnotesize{ - Spec.}  & 67.1 & 36.7 & 60.5 & 63.7 \\
\footnotesize{ - Spec. (No Aug)}  & 62.8 & 34.2 & 64.8 & 63.2 \\
\footnotesize{ - Leiden}  & 63.2 & 34.1 & 61.7 & 64.8 \\
\footnotesize{ - Single-Layer}  & 63.8 & 36.4 & 60.1 & 63.6 \\
\footnotesize{ - Entity-Only}  & 61.2 & 34.7 & 59.9 & 63.1 \\
\footnotesize{ - Direct Prompt}  & 59.9 & 29.6 & 40.7 & 45.4 \\
        \bottomrule
    \end{tabular}
\caption{Comparing the performance of different variants of ArchRAG on the specific QA tasks. Acc and Rec denote Accuracy and Recall, respectively.}
    \label{tab:variant}
\end{table}

$\bullet$ \textbf{Ablation study.}
%
To evaluate the contributions of different components, we design several ArchRAG variants and conduct ablation experiments.
These include two modifications to the LLM-based hierarchical clustering framework, three targeting core design elements in ArchRAG—attributes, hierarchy, and communities—and one direct prompting variant, as detailed below:
\begin{itemize}
    \item {Spec.}: Spectral clustering instead of weighted Leiden.
    \item {Spec. (No Aug)}: Spectral clustering without graph augmentation.
    \item {Leiden}: Replaces our clustering framework with Leiden.
    \item {Single-Layer}: Replaces our hierarchical index and search with a single-layer community.
    \item {Entity-Only}: Generate the response using entities only.
    \item {Direct Prompt}: Direct prompts the LLM to generate the response without the adaptive filtering-based generation.
\end{itemize}

%
As shown in Table \ref{tab:variant}, the performance of {ArchRAG} on specific QA tasks decreases when each feature is removed, with the removal of the community component resulting in the most significant drop.
Additionally, the direct variant demonstrates that the adaptive filtering-based generation process can effectively extract relevant information from retrieved elements.

$\bullet$ \textbf{Impact of Attributed Communities in RAG. } We propose a new variant, \textit{GraphRAG+AC}, which replaces the original Leiden-based communities in GraphRAG with our ACs, while preserving the original Global Search pipeline.
As shown in Table~\ref{tab:varison-gr}, this variant results in a significant performance improvement compared to the original approach.
Specifically, on the HotpotQA dataset, GraphRAG+AC improves accuracy by 51\% compared to GGraphRAG.

\begin{table}[tb]
    \centering
    \small
    \begin{tabular}{c|cccc}
\toprule
\multirow{2}{*}{Method} &  \multicolumn{2}{c}{Multihop-RAG} & \multicolumn{2}{c}{HotpotQA} \\
\cmidrule(lr){2-3} \cmidrule(lr){4-5}
&  (Acc) & (Rec) & (Acc) & (Rec)\\
\midrule
GGraphRAG & 45.9 & 28.4 & 33.5 & 42.6 \\
GraphRAG+AC & 49.3 & 31.4 & 50.6 (51.0\% $\uparrow$) & 52.8 \\
ArchRAG & 68.8 & 37.2 & 65.4 & 69.2 \\
\bottomrule
    \end{tabular}
    \caption{Results of GraphRAG variants using our ACs.}
    \label{tab:varison-gr}
\end{table}

We further test ArchRAG under {\bf different LLM backbones}, various {\bf top-$k$ retrieval} settings, evaluate the {\bf community quality} of its LLM-based hierarchical clustering, and also provide additional \textbf{case studies} of our ArchRAG.
%
Detailed experiments are provided in the extended version.



\section{Conclusion}
\label{sec:conclusion}

In this paper, we propose ArchRAG, a novel graph-based RAG approach, by augmenting the question using attributed communities from the knowledge graph built on the external corpus, and building a novel index for efficient retrieval of relevant information.
Our experiments show that ArchRAG is highly effective and efficient.
In the future, we will explore fast parallel graph-based RAG methods to process large-scale external corpus.


\section{Acknowledgments}
This work was supported in part by the 1+1+1 CUHK-CUHK(SZ)-GDSTC Joint Collaboration Fund under Grant 2025A0505000045, Guangdong Provincial Key Laboratory of Mathematical Foundations for Artificial Intelligence (2023B1212010 001), Shenzhen Research Institute of Big Data under Grant SIF20240002, and Huawei Collaboration Fund under Grant TC20240920019.

{
\small
\bibliography{aaai2026, references}
}

\appendix
\clearpage

\section{Method details of ArchRAG}

\subsection{LLM-based hierarchical clustering}
\label{sec:LHC}

For example, as shown in the second step of offline indexing in Figure \ref{fig:overview}, LLM-based hierarchical clustering first enhances the original KG at layer $L_0$ by adding similar edges.
Next, the weight of each edge is calculated based on the strength of the relationship between the node embeddings, and a weighted clustering algorithm is applied to obtain four communities.
Based on the existence of links between nodes within a community, the topology of communities at layer $L_1$ is constructed, resulting in a graph of communities. 
This process is repeated, ultimately generating a clustering result consisting of three layers of hierarchical communities.

        

           

\subsection{More details of C-HNSW}
\label{sec:c_hnsw}


$\bullet$ \textbf{Introduction of HNSW. }
We provide a brief introduction to HNSW, an efficient Approximate Nearest Neighbor Search (ANNS) technique for vector databases.

\begin{definition}[Hierarchical Navigable Small World (HNSW) \cite{malkov2018efficient}]
    HNSW is a graph-based ANNS algorithm that consists of a multi-layered index structure, where each node uniquely corresponds to a vector in the database.
    Given a set $S$ containing $n$ vectors, the constructed HNSW can be represented as a pair $\mathcal{H} = (\mathcal{\bf G}, \mathcal{C})$. 
    $\mathcal{\bf G}=\{G_0, G_1, \dots, G_L\}$ is a set of simple graphs (also called layers) $G_i=(V_i, E_i)$, where $i \in \{0,1,\dots, L\}$ and $V_L \subset V_{L-1} \subset \dots \subset V_1 \subset V_0 = S$.
    $\mathcal{C}$ records the inter-layer mappings of edges between the same node across adjacent layers $\mathcal{C} = \bigcup_{i=0}^{L-1} \{(v,\phi(v))|v \in V_i,\phi(v)\in V_{i+1} \}$, where $\phi(v):V_i \rightarrow V_{i+1}$ is the mapping function for the same node across two adjacent layers.
\end{definition}

The nodes in the multi-layer graph of HNSW are organized in a nested structure, where each node at each layer is connected to its nearest neighbors. 
During a query, the search begins at the top layer and quickly identifies the node closest to $q$ through a greedy search.
Then, through inter-layer mapping, the search proceeds to the next lower layer.
This process continues until all approximate nearest neighbors are identified in $G_0$.

$\bullet$ \textbf{The construction of C-HNSW. } 
A naive approach to build the C-HNSW index is to build nodes first and then establish the two types of links by finding the nearest neighbors of each node.
However, the process of finding the nearest neighbor is costly.
To accelerate the construction, we propose a top-down approach by borrowing the idea of HNSW construction.
The construction of C-HNSW is illustrated in Algorithm \ref{alg:construction}.
The construction algorithm of C-HNSW follows a top-down approach. 
Using the query process of C-HNSW, we obtain the $M$ nearest neighbors of each node in its layer, and the inter-layer links are continuously updated during this process.
When inserting a node $x$ at layer $i$ , if the nearest neighbor at the layer $i+1$ is $c_j$, the inter-layer link of $c_j$ is updated in the following two cases:
\begin{itemize}
    \item Node $c_j$ does not have a inter-layer link to the layer $i$.
    \item The distance from node $c_j$ to node $x$ is smaller than the distance from $c_j$ to its previous nearest neighbor $x'$, i.e., $d(c_j,x) < d(c_j,x')$.
\end{itemize}

After all nodes at layer $i$ ($i<L$) have been inserted, we check each node at layer $i+1$ to ensure it has a inter-layer link, confirming the traverse from the higher layer to the lower layer.

\begin{algorithm}[ht]
  \caption{{C-HNSW construction}}
  \label{alg:construction}
  \small
   \SetKwInOut{Input}{input}\SetKwInOut{Output}{output}
    \Input{The Hierarchical community $\mathcal{HC}$, KG $\mathcal{G}(V,E)$, maximum number of connections for each node $M$.}
    $\mathcal{H}\gets\emptyset$\;
    
    ${\bf V}\gets$ \{$\mathcal{HC} \cup V$\} \tcp{\textcolor{teal}{Get all nodes of each layer.}}
    \For{each layer $l \gets L \cdots 0$}{
        \For{each node $v \in V_l$}{
            \If{$l\neq L$}{
                $R\gets$ SearchLayer(${ G_l}=(V_l,E_l), q, s, 1$)\;
                $c\gets$ get the nearest node from R\;
                $s\gets$ node in layer $l-1$ via $c$'s inter-layer link\;
                \If{$s$ is null\texttt{ or }$d(c,s)>d(c,v)$}{
                update $v$ as $c$'s inter-layer link
                }
            }
            \lIf{$l=L$\texttt{ or }$s$ is null}{$s\gets $ random node in layer $l$}
            $R\gets$ SearchLayer(${ G_l}=(V_l,E_l), q, s, M$)\;
            add edges between $v$ and $R$, update $E_l$\; 
        }
        $\mathcal{H}\gets \mathcal{H}\cup { G_l}=(V_l, E_l)$
    }
    \Return{$\mathcal{H}$;}
\end{algorithm}

\subsection{Complexity analysis of ArchRAG}
\label{sec:complexity}

We now analyze the complexity of our ArchRAG approach.
Since the token cost is more important in the era of LLM, we replace the space complexity with the token cost.

The offline indexing process of ArchRAG includes the KG construction, hierarchical clustering, and C-HNSW construction.
The time complexity and token usage are as follows:


\begin{lemma}
\label{lemma:index-time}
Given a large text corpus or a large set of text documents with a total of $D$ tokens, the time complexity of the offline indexing process of ArchRAG is $O(I\frac{D}{w} + \frac{1-a^L}{1-a}(n*t+I\frac{D}{w}+\pi(m)+n\log n))$, where ${\bf I}$ is the generation time of the LLM for a single inference, $w$ is the specified token size of one chunk, $n$ and $m$ are the number of entities and relations in the extracted KG, $L$ is the height of the resulting hierarchical community structure, and $a$ is the average ratio of the number of nodes between two consecutive layers, with $0<a<1$. For a given embedding model, the computation time for the embedding of an entity description is denoted by $t$, while a specific clustering method is typically a function of $m$, represented as $\pi(m)$.
\end{lemma}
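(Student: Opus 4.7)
The plan is to decompose the offline indexing process into its three consecutive stages described in Section \ref{sec:ArchRAG} --- KG construction, LLM-based hierarchical clustering (Algorithm \ref{alg:LHC}), and C-HNSW construction (Algorithm \ref{alg:construction}) --- bound each contribution separately, and then combine them.

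First, for KG construction, the corpus is split into $\lceil D/w \rceil$ chunks, each processed by exactly one LLM call to extract entities and relations. This contributes the leading $O(I \cdot D/w)$ term outside the summation and matches the first summand in the claimed bound.

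Second, I analyze a single iteration of Algorithm \ref{alg:LHC} at layer $i$, with $n_i$ nodes and $m_i$ edges. The five per-iteration steps cost respectively: (i) attribute-based graph augmentation requires a language-model embedding per node, contributing $O(n_i \cdot t)$; (ii) updating the weight of each augmented edge is linear and absorbed in the rest; (iii) the chosen clustering algorithm costs $\pi(m_i)$; (iv) LLM summarization over all communities at layer $i$, whose combined input text is bounded by $O(D)$ tokens (since every community's summary is assembled from the lower-layer descriptions, and the layer's total textual footprint is no larger than the original corpus), contributing $O(I \cdot D/w)$; and (v) the new-graph construction is linear in $m_i$ and absorbed. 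Augmenting each layer with its share of the C-HNSW insertion cost, which by the HNSW analysis of \cite{malkov2018efficient} is $O(n_i \log n_i)$ for the $n_i$ nodes at layer $i$, gives a per-iteration bound of
\[
O\Bigl(n_i \cdot t \;+\; I\tfrac{D}{w} \;+\; \pi(m_i) \;+\; n_i \log n_i\Bigr).
\]

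Third, I sum over the $L$ layers using the hypothesis $n_i \le a^i n$ with $0<a<1$. Each of the terms $n_i \cdot t$ and $n_i \log n_i$ is bounded by $a^i$ times its layer-$0$ value, $m_i \le m$ gives $\pi(m_i)\le \pi(m)$ under the mild monotonicity of $\pi$, and the LLM-summary term $I \cdot D/w$ is bounded uniformly across layers. Then the geometric series $\sum_{i=0}^{L-1} a^i = \frac{1-a^L}{1-a}$ aggregates all per-layer contributions into
\[
\frac{1-a^L}{1-a}\Bigl(n\cdot t + I\tfrac{D}{w} + \pi(m) + n\log n\Bigr),
\]
and adding the one-time KG construction cost $O(I \cdot D/w)$ yields exactly the claimed bound.

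The main obstacle is justifying step (iv): it is not immediate that the aggregate LLM input over all communities at a single layer should be bounded by $O(D)$ rather than growing with the number of communities times their average size. I would argue this by observing that every community summary at layer $i>0$ is assembled from the summaries of its children at layer $i-1$, so the total token footprint across one layer is a contraction of the previous layer's and is therefore at most the original corpus size; chunking this input for the LLM produces at most $O(D/w)$ generations, each costing $I$. A secondary subtlety is ensuring $\pi(\cdot)$ is treated as a non-decreasing function so that $\pi(m_i)\le \pi(m)$ at every layer, which holds for all clustering methods cited in Section \ref{sec:llm_cluster}.
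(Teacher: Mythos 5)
Your proposal is correct and follows essentially the same route as the paper's proof: the same three-stage decomposition (KG construction, per-layer embedding/clustering/summarization, C-HNSW insertion), the same observation that the aggregate summary text in any single layer is bounded by the corpus size so that per-layer LLM inference costs $O(I\frac{D}{w})$, and the same geometric-series aggregation $\frac{1-a^L}{1-a}$ over the $L$ layers. You are merely more explicit than the paper about the contraction argument for step (iv) and the monotonicity assumption on $\pi(\cdot)$, both of which the paper leaves implicit.
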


\begin{proof}
    For the corpus $D$, we use the LLM to infer and extract the KG from each chunk of size $w$, resulting in a cost of $O(I\frac{D}{w})$ for constructing the KG. Since the size of all community summaries in a single layer would not exceed the length of the corpus, their LLM inference time is also less than $O(I\frac{D}{w})$. For each layer of clustering, the embedding of each point must be computed, and clustering is performed with time complexity of $\pi(m)$. In the C-HNSW construction, each point requires $O(nlogn)$ time to perform the k-nearest neighbor search and establish connections, which is similar to the proof in \cite{malkov2018efficient}. For an L-layer multi-layer graph structure, where the number of nodes decreases by a factor of $a$ between two consecutive layers, the increase in clustering and C-HNSW construction time is given by: $\frac{1 - a^L}{1 - a}$.
\end{proof}

\begin{lemma}
\label{lemma:index-cost}
Given a large text corpus or a large set of text documents with a total of $D$ tokens, the number of tokens used in the offline indexing process of ArchRAG is $O(D(1+\frac{1-a^L}{1-a}))$, where $L$ is the height of the resulting hierarchical community structure and $a$ is the average ratio of the number of nodes between two consecutive layers, with $0<a<1$.
\end{lemma}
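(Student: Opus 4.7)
The plan is to decompose the total token usage of offline indexing into its three subphases, and bound each in turn: (i) knowledge graph construction from the raw corpus, (ii) the $L$ iterations of LLM-based hierarchical summary generation, and (iii) the C-HNSW construction. Subphase (iii) can be dispatched immediately, since it only invokes the embedding model on text that is already counted in the other phases and performs graph operations (nearest-neighbor searches, edge insertions) that do not involve any LLM generation. So the task reduces to bounding (i) and (ii).

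For the KG-construction phase, the corpus is partitioned into chunks whose total length is $D$ tokens, and each chunk is passed to the LLM exactly once for entity/relation extraction. The outputs (entity and relation descriptions, possibly after a merge step) are bounded by a constant multiple of the input, so the entire phase consumes $O(D)$ tokens. This accounts for the first $D$ term in the claimed bound.

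For the hierarchical clustering phase, I would argue by induction on the layer index $i \in \{1, \dots, L\}$ that the total token traffic at layer $i$ is $O(a^{i-1} D)$. The base case follows because the LLM inputs at layer $1$ are the node attribute descriptions, whose total size is $O(D)$, and each community summary has size proportional to its input. For the inductive step, the inputs at layer $i$ are precisely the summaries produced at layer $i-1$, whose total size is $O(a^{i-1} D)$ by hypothesis; since the number of communities shrinks by the average factor $a$ at each level, and the output summary size is proportional to the number of communities, the layer-$i$ traffic is $O(a^{i-1} D)$ as well. Summing across all $L$ layers yields the geometric series
\begin{equation}
\sum_{i=1}^{L} a^{i-1} D \;=\; D \cdot \frac{1 - a^{L}}{1 - a},
\end{equation}
which combined with the $O(D)$ term from KG construction gives the claimed bound.

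The main obstacle is justifying the geometric decay rather than the weaker per-layer bound of $D$ that the companion lemma (Lemma~\ref{lemma:index-time}) uses to estimate time. A blunt application of that $D$-per-layer bound would only yield $O(LD)$, which is asymptotically looser. Obtaining the sharper $a^{i-1} D$ rate requires invoking the assumption that $a$ is the average compression ratio between consecutive layers and that the aggregate summary length scales with the number of communities being summarized; once this scaling is formalized, the geometric sum is routine.
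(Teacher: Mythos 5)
Your proposal is correct and follows essentially the same route as the paper: the $O(D)$ term comes from the KG-extraction pass over the chunked corpus, and the $D\frac{1-a^L}{1-a}$ term comes from summing the per-layer summary cost as a geometric series in the compression ratio $a$. In fact your explicit induction establishing the $O(a^{i-1}D)$ per-layer rate is more careful than the paper's own two-sentence proof, which simply asserts the bound by reference to the time-complexity lemma (whose per-layer estimate is the cruder "does not exceed the corpus length"), so the scaling assumption you flag as the main obstacle is exactly the step the paper leaves implicit.
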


\begin{proof}
    Based on the above proof, the token cost for constructing the KG is $O(D)$, while the token cost for the summaries does not exceed $O(D\frac{1-a^L}{1-a})$.
\end{proof}

Generally, $L$ is $O(\log n)$, where $n$ is the number of extracted entities. 
However, due to the constraints of community clustering, $L$ is typically constant, usually no greater than 5.

Next, we analyze the time complexity and token usage in the online query process of the ArchRAG.

\begin{lemma}
\label{lemma:query-time}
Given a C-HNSW with $L$ layers constructed from a large text corpus or a large set of text documents, the time complexity of a sequentially executed single online retrieval query in ArchRAG is $O(e+ LkI+Lk\log(n))$, where $I$ is the generation time of the LLM for a single inference, $e$ is the time cost of computing the query embedding, $k$ is the number of nodes retrieved at each layer, and $n$ is the number of nodes at the lowest layer in C-HNSW.
\end{lemma}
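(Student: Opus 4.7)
The plan is to decompose the online retrieval pipeline into its three sequential stages and bound each separately: (i) embedding the question into a query vector, (ii) performing the hierarchical search over the $L$ layers of C-HNSW to obtain $k$ relevant elements per layer, and (iii) running the adaptive filtering-based generation on the retrieved material. Summing the three bounds will yield the claimed $O(e + Lk\log n + LkI)$ expression.

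For stage (i), computing the query embedding is a single forward pass of the fixed embedding model and, by definition of $e$, costs $O(e)$. This term appears additively and is independent of $L$, $k$, and $n$.

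For stage (ii), I would mirror the analysis of greedy search on navigable small-world graphs used for HNSW in \cite{malkov2018efficient}. Each call to \texttt{SearchLayer} in Algorithm \ref{alg:query} performs a greedy expansion along intra-layer links; because every node is linked to at least $M$ nearest neighbors via the top-down construction of Algorithm \ref{alg:construction}, the expected number of distance evaluations needed to locate one nearest neighbor at layer $i$ is $O(\log n_i)$, where $n_i = |V_i|$. Retrieving the top-$k$ dynamic nearest-neighbor set $R$ then takes $O(k \log n_i) \subseteq O(k \log n)$, since $n_i \le n$ for every layer. The transition between two consecutive layers uses a single inter-layer link and thus adds only $O(1)$ per layer. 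Repeating the procedure at each of the $L$ layers gives a total graph traversal cost of $O(Lk \log n)$.

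For stage (iii), the adaptive filtering step invokes the LLM on the retrieved content $R_i$ of every layer, whose size is proportional to the number $k$ of retrieved elements; the sorting of analysis reports is $O(L \log L)$ and is dominated by the inference cost. Each inference takes $O(I)$ on an input of size $\Theta(k)$, so producing all $L$ analysis reports contributes $O(LkI)$, and the final merge call is absorbed into this bound. Adding the three contributions yields $O(e + Lk\log n + LkI)$, as claimed.

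The main obstacle is making the per-layer $O(\log n)$ greedy-search bound rigorous without re-proving the small-world property. I would sidestep this by importing the expected-complexity result of \cite{malkov2018efficient} and remarking that C-HNSW differs from HNSW only in how layer membership is assigned (dictated by the attributed community tree $\Delta$ rather than random level sampling) and in the presence of intra-layer edges at every layer. Neither modification affects the analysis, because the top-down construction in Algorithm \ref{alg:construction} ensures that every inter-layer link lands on the current nearest neighbor in the lower layer, so the search enters each new layer already in the correct neighborhood; the remaining in-layer traversal then satisfies the same logarithmic concentration used in the original HNSW argument.
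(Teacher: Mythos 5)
Your proposal is correct and follows essentially the same route as the paper's proof: both decompose the query into the embedding step ($O(e)$), the per-layer greedy search bounded by importing the $O(\log n)$ HNSW result of \cite{malkov2018efficient} (giving $O(Lk\log n)$ overall), and the LLM-based adaptive filtering ($O(LkI)$). Your version is somewhat more careful than the paper's — in particular your remarks on why the HNSW bound transfers to C-HNSW despite the different layer-membership rule and the intra-layer edges — but the underlying argument is the same.
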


\begin{proof}
    ArchRAG first computes the embedding of the query, which takes $O(e)$ time. For each layer, querying one nearest neighbor takes no more than $O(\log(n))$ time, similar to the proof in \cite{malkov2018efficient}. In the Adaptive filtering-based generation, the content of each query is analyzed and inferred, requiring $O(LkI)$ time. Therefore, the total time for the online retrieval query is $O(e+ LkI+Lk\log(n))$.
\end{proof}

\begin{lemma}
\label{lemma:query-cost}
Given a C-HNSW with $L$ layers constructed from a large text corpus or a large set of text documents, the number of tokens used for a single online retrieval query in ArchRAG is $O(kL(c+P)))$, where $k$ is the number of neighbors retrieved at each layer, $c$ is the average token of the retrieved content, and $P$ is the token of the prompt.
\end{lemma}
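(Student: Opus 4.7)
The plan is to account separately for the two stages of online retrieval, since only the adaptive filtering-based generation actually consumes LLM tokens; the hierarchical search on C-HNSW is carried out on the graph index using the precomputed embedding of the query and therefore contributes no token cost. So I would first argue that token usage is entirely attributable to the two LLM calls appearing in the equations defining $A_i$ and $Output$, and then bound each.

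For the filtering stage, the hierarchical search returns at most $k$ elements from each of the $L$ layers of $\mathcal{H}$, giving at most $kL$ retrieved items in total. For each such item the LLM is invoked with the concatenation $P_{filter} \| R_i$, whose length is bounded by $P + c$ because $P$ denotes the prompt length and $c$ is the average length of a retrieved piece of content. Summing across the $kL$ invocations gives a contribution of $O(kL(c+P))$ tokens for this stage, which is the dominant term.

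For the merging stage, one additional LLM call is made on the sorted list $\{A_0, A_1, \ldots\}$ together with the prompt $P_{merge}$. Because each analysis report $A_i$ is generated from a single retrieved item of length $O(c)$ and any content beyond the token limit is truncated, we have $|A_i| = O(c)$, so the merge input has length $O(kL \cdot c + P)$, which is absorbed by the same $O(kL(c+P))$ bound.

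The only subtlety I see is justifying that $|A_i|=O(c)$ and that truncation keeps the merge input within the same envelope rather than blowing up; once that observation is made, combining the two stages yields the claimed $O(kL(c+P))$ bound. I would state this truncation/length assumption explicitly and then simply add the two contributions.
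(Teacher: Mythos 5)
Your proposal is correct and follows essentially the same decomposition as the paper's proof: the filtering stage over the $kL$ retrieved items accounts for the $O(kL(c+P))$ term, and the final merge/generation call is absorbed. The only cosmetic difference is that the paper dismisses the merge stage as ``constant order'' (appealing to truncation at the context limit), whereas you bound its input explicitly by $O(kLc+P)$ before noting it is dominated --- a slightly more careful treatment of the same step.
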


\begin{proof}
    The token consumption for analyzing all retrieved information is $O(kL(c+P)))$, while the token consumption for generating the final response is of constant order. Therefore, the total token consumption for the online retrieval is $O(kL(c+P)))$.
\end{proof}

In practice, multiple retrieved contents are combined, allowing the LLM to analyze them together, provided the total token count does not exceed the token size limit. 
As a result, the time and token usage for online retrieval are lower than those required for analysis.

\section{Experimental details}

\subsection{Metrics}
\label{sec:metrics}

This section provides additional details on the metrics.

$\bullet$ \textbf{Metrics for specific QA tasks. } We choose accuracy as the evaluation metric based on whether the gold answers are included in the model's generations rather than strictly requiring an exact match, following \cite{schick2024toolformer,mallen2022not,asai2023self}.
This is because LLM outputs are typically uncontrollable, making it difficult for them to match the exact wording of standard answers.
Similarly, we choose recall as the metric instead of precision, as it better reflects the accuracy of the generated responses.
Additionally, when calculating recall, we adopt the same approach as previous methods \cite{gutierrez2024hipporag,asai2023self}: if the golden answer or the generated output contains ``yes'' or ``no'', the recall for that question is set to 0. 
Therefore, the recall metric is not perfectly correlated with accuracy.

$\bullet$ \textbf{Metrics for abstract QA tasks. }
Following existing works, we use an LLM to generate abstract questions, with the prompts shown in Figure \ref{fig:prompt_summary}, defining ground truth for abstract questions, particularly those involving complex high-level semantics, poses significant challenges.
We build on existing works \cite{edge2024local,guo2024lightrag} to address this and adopt an LLM-based multi-dimensional comparison method (including comprehensiveness, diversity, empowerment, and overall).
We employ a robust LLM, specifically GPT-4o, to rank each baseline against our method.
Figure \ref{fig:eval_summary} shows the evaluation prompt we use.

$\bullet$ \textbf{Metrics of community quality. } We select the following metrics to evaluate the quality of the community:
\begin{enumerate}
    \item \textit{Calinski-Harabasz Index (CHI)} \cite{calinski1974CHIndex}: A higher value of CHI indicates better clustering results because it means that the data points are more spread out between clusters than they are within clusters. It is an internal evaluation metric where the assessment of the clustering quality is based solely on the dataset and the clustering results and not on external ground-truth labels. The CHI is calculated by between-cluster separation and within-cluster dispersion:
    \begin{equation}
    CHI = \frac{N-C}{C-1}\frac{\sum^{C}_{i=1}n_i||c_i-c||^2}{\sum^{C}_{i=1}\sum_{x \in C_i}||x-c_i||^2}.
    \end{equation}
    $N$ is the number of nodes. $C$ is the number of clusters.$n_i$ is the number of nodes in cluster $i$. $C_i$ is the $i-th$ cluster. $c_i$ is the centroid of cluster $C_i$. $c$ is the overall centroid of the datasets. $x$ is the feature of the target node.
    
    \item \textit{Cosine Similarity (Sim)} \cite{charikar2002similarity}: Cosine similarity is a measure of similarity between two non-zero vectors defined in an inner product space. In this paper, for each cluster, we calculate the similarity between the centroid of this cluster and each node in this cluster:
    \begin{equation}
    Sim = \frac{1}{N}\sum^{C}_{i=1}\sum_{x \in C_i}Cosine(x, c_i).
    \end{equation}
    $N$ is the number of nodes. $C$ is the number of clusters. $C_i$ is the $i$-th cluster. $c_i$ is the centroid of cluster $C_i$. $x$ is the feature of the target node.
    \begin{equation}
    Cosine(x, y) = \frac{\sum_{i=1}^{n}x_i y_i}{\sqrt{\sum_{i=1}^{n}x_i^2}\sqrt{\sum_{i=1}^{n}y_i^2}}.
    \end{equation}
\end{enumerate}


\subsection{Implementation details}
\label{sec:imp}

We implement our ArchRAG in Python, while C-HNSW is implemented in C++ and provides a Python interface for integration.
We implement C-HNSW using the FAISS framework and employ the inner product metric to measure the proximity between two vectors. 
All the experiments were conducted on a Linux operating system running on a machine with an Intel Xeon 2.0 GHz CPU, 1024GB of memory, and 8 NVIDIA GeForce RTX A5000 GPUs, each with 24 GB of memory.
All methods utilize 10 concurrent LLM calls, and to maintain consistency, other parallel computations in the method, such as embedding calculations, also use 10 concurrent threads. 
%
%
Figure \ref{fig:prompt_afg} demonstrates the prompt used in Adaptive filtering-based generation.
Please refer to our repository (https://anonymous.4open.science/r/H-CAR-AG-1E0B/) to view the detailed prompts.

$\bullet$ \textbf{Details of clustering methods. }
The graph augmentation methods we choose are the KNN algorithm, which computes the similarity between each node, and CODICIL~\cite{CODICIL2013efficient}, which selects and adds the top similar edges to generate better clustering results. 
In the KNN method, we set the $K$ value as the average degree of nodes in the KG.

\subsection{Additional experiments}
\label{sec:more-exp}

$\bullet$ \textbf{Efficiency of hierarchical search (C-HNSW).}
To evaluate the efficiency of C-HNSW, we conduct experiments on a synthetic hierarchical dataset comprising 11 layers.
The bottom layer (Layer 0) contains 10 million nodes, and the number of nodes decreases progressively across higher layers by randomly dividing each layer’s size by 3 or 4.
The top layer (Layer 10), for example, contains only 74 nodes.
This hierarchical structure simulates the process of LLM-based hierarchical clustering.
Each node is assigned a randomly generated 3072-dimensional vector, simulating high-dimensional embeddings such as those produced by text or image encoders (e.g., \texttt{text-embedding-3-large}, used in ChatGPT, can generate 3072-dimensional vectors, and text-embedding-v3 can generate 1024-dimensional vectors).
For each layer, we generate 200 random queries and compute the top-5 nearest neighbors for each query.
Both C-HNSW and Base-HNSW are configured with identical parameters: $M = 32$, $efSearch = 100$, and $efConstruction = 100$.
Importantly, our method maintains comparable retrieval accuracy to Base-HNSW, with recall of 0.5537 and 0.6058, respectively.
We compare our hierarchical search based on C-HNSW with a baseline approach (Base-HNSW), which independently builds a vector index for attributed communities at each layer and performs retrieval separately for each.
As shown in Figure~\ref{fig:c_hnsw-efficiency}, on the large-scale synthetic dataset, C-HNSW achieves up to a 5.4$\times$ speedup (On level 1, C-HNSW takes 1.861 seconds, while Base-HNSW takes 10.125 seconds.) and is on average 3.5$\times$ faster than Base-HNSW. 
Additional details are provided in the appendix.

\pgfplotstableread[row sep=\\,col sep=&]{
lvl & bhnsw & chnsw \\
1 & 10.536 & 3.872 \\
2 & 10.125 & 1.861 \\
3 & 9.505 & 1.83 \\
4 & 8.919 & 1.792 \\
5 & 8.155 & 1.746 \\
6 & 6.856 & 1.653 \\
7 & 5.428 & 1.581 \\
8 & 2.773 & 1.381 \\
9 & 1.135 & 1.092 \\
10 & 0.406 & 0.716 \\
11 & 0.146 & 0.453 \\
12 & 5.816 & 1.635 \\
}\HNSWtime

\begin{figure}[h]
    \centering
    \quad \ref{c_hnsw_index}\\
    \begin{tikzpicture}[scale=0.45]
        \begin{axis}[
            ybar=0.5pt,
            bar width=0.5cm,
            width=\textwidth,
            height=0.25\textwidth,
            xlabel={\Huge \bf Level}, 
            xtick=data,	
            xticklabels={0,1,2,3,4,5,6,7,8,9,10,Avg.},
            legend style={
              at={(0.0,1.05)}, 
anchor=south west, 
legend columns=-1, 
            align=left, 
            draw=none},
            legend image code/.code={
                \draw [#1, line width=0.5pt] (0cm,-0.1cm) rectangle (0.3cm,0.2cm); },
            legend to name=c_hnsw_index,
            xmin=0,xmax=13,
            ymin=0,ymax=15,
            tick align=inside,
            ticklabel style={font=\Huge},
            every axis plot/.append style={line width = 2.5pt},
            every axis/.append style={line width = 2.5pt},
            ylabel={\textbf{\Huge time (s)}}
            ]
\addplot[fill=c6] table[x=lvl,y=bhnsw]{\HNSWtime};
\addplot[fill=c8] table[x=lvl,y=chnsw]{\HNSWtime};
\legend{\small {Base-HNSW}, \small {C-HNSW}}
            \end{axis}
    \end{tikzpicture}
    \caption{C-HNSW and Base-HNSW query efficiency.}
    \label{fig:c_hnsw-efficiency}
\end{figure}

We also conducted experiments on a synthetic dataset of 1024-dimensional vectors, keeping all other parameters unchanged. The results are shown in Figure~\ref{fig:c_hnsw-efficiency-2}.
As 1024 dimensions are more representative of commonly used high-dimensional embeddings, our method still achieves over 5× speedup in the best case and an average speedup of 3× compared to the baseline.

\pgfplotstableread[row sep=\\,col sep=&]{
lvl & bhnsw & chnsw \\
1 & 12.1495 & 3.9887 \\
2 & 10.4776 & 2.0359\\
3 & 9.5378 & 1.9015\\
4 & 7.9879 & 1.8319\\
5 & 7.2649 & 1.7477 \\
6 & 5.9064 & 1.6665 \\
7 & 4.6033 & 1.5740 \\
8 & 2.4135 & 1.3755 \\
9 & 1.0340 & 1.1214 \\
10 & 0.4291 & 0.7797 \\
11 & 0.1702 & 0.5701 \\
12 & 6.179 & 1.69 \\
}\HNSWtimethousand

\begin{figure}[h]
    \centering
    \quad \ref{c_hnsw_index_2}\\
    \begin{tikzpicture}[scale=0.45]
        \begin{axis}[
            ybar=0.5pt,
            bar width=0.5cm,
            width=\textwidth,
            height=0.25\textwidth,
            xlabel={\Huge \bf Level}, 
            xtick=data,	
            xticklabels={0,1,2,3,4,5,6,7,8,9,10,Avg.},
            legend style={
              at={(0.0,1.05)}, 
anchor=south west, 
legend columns=-1, 
            align=left, 
            draw=none},
            legend image code/.code={
                \draw [#1, line width=0.5pt] (0cm,-0.1cm) rectangle (0.3cm,0.2cm); },
            legend to name=c_hnsw_index_2,
            xmin=0,xmax=13,
            ymin=0,ymax=15,
            tick align=inside,
            ticklabel style={font=\Huge},
            every axis plot/.append style={line width = 2.5pt},
            every axis/.append style={line width = 2.5pt},
            ylabel={\textbf{\Huge time (s)}}
            ]
\addplot[fill=c6] table[x=lvl,y=bhnsw]{\HNSWtimethousand};
\addplot[fill=c8] table[x=lvl,y=chnsw]{\HNSWtimethousand};
\legend{\small {Base-HNSW}, \small {C-HNSW}}
            \end{axis}
    \end{tikzpicture}
    \caption{C-HNSW and Base-HNSW query efficiency on 1024-dimensional vector dataset.}
    \label{fig:c_hnsw-efficiency-2}
\end{figure}

$\bullet$ \textbf{Efficiency of indexing phrase.} 
Figure \ref{fig:index-efficiency} shows the index construction time and token usage for different methods.
The cost of building an index for {ArchRAG} is similar to that of {GraphRAG}, but due to the need for community summarization, both take a higher time cost and token usage than {HippoRAG}.

\begin{figure}[h]
    \centering
    \quad \ref{eff_index}\\
    \subfigure[Time cost]{
    \begin{tikzpicture}[scale=0.45]
            \begin{axis}[
                ybar=0.5pt,
                bar width=0.6cm,
                width=0.44\textwidth,
                height=0.25\textwidth,
                xtick=data,	
                xticklabels={\huge Multihop-RAG,\huge HotpotQA},
                legend style={
                  at={(0.0,1.05)}, 
    anchor=south west, 
    legend columns=3, 
                align=left, 
                draw=none},
                legend image code/.code={
                    \draw [#1, line width=0.5pt] (0cm,-0.1cm) rectangle (0.3cm,0.2cm); },
                legend to name=eff_index,
                xmin=1,xmax=5,
                ymin=1000,ymax=100000,
                ytick = {1000,10000,100000},
                yticklabels = {$10^{3}$, $10^{4}$, OOT},
                ymode = log,
                tick align=inside,
                ticklabel style={font=\Huge},
                every axis plot/.append style={line width = 2.5pt},
                every axis/.append style={line width = 2.5pt},
                ylabel={\textbf{\huge time (s)}}
                ]
\addplot[fill=c9] table[x=datasets,y=raptor]{\Indextime};
\addplot[fill=c5] table[x=datasets,y=Hippo]{\Indextime};
\addplot[fill=c6] table[x=datasets,y=light]{\Indextime};
\addplot[fill=c7] table[x=datasets,y=GraphRAG]{\Indextime};
\addplot[fill=c8] table[x=datasets,y=ArchRAG]{\Indextime};
\legend{\small {RAPTOR}, \small {HippoRAG},\small {LightRAG},\small {GraphRAG},\small {ArchRAG}}
            \end{axis}
        \end{tikzpicture}
        \label{fig:index-time}
    }
    \subfigure[Token cost]{
		\begin{tikzpicture}[scale=0.45]
            \begin{axis}[
                ybar=0.5pt,
                bar width=0.6cm,
                width=0.44\textwidth,
                height=0.25\textwidth,
                xtick=data,
                xticklabels={\huge Multihop-RAG,\huge HotpotQA},
                xmin=1,xmax=5,
                ymin=1000000,ymax=100000000,
                ytick = {1000000, 10000000, 100000000},
                yticklabels = {1,10,100},
                ymode = log,
                tick align=inside,
                ticklabel style={font=\Huge},
                every axis plot/.append style={line width = 2.5pt},
                every axis/.append style={line width = 2.5pt},
                ylabel={\textbf{\huge token (M)}}
                ]
\draw[c9] (axis cs:3.3,1000000) node[rotate=90, anchor=west] {\Huge \bf OOT};
\addplot[fill=c9] table[x=datasets,y=raptor]{\Indextoken};
\addplot[fill=c5] table[x=datasets,y=Hippo]{\Indextoken};
\addplot[fill=c6] table[x=datasets,y=light]{\Indextoken};
\addplot[fill=c7] table[x=datasets,y=GraphRAG]{\Indextoken};
\addplot[fill=c8] table[x=datasets,y=ArchRAG]{\Indextoken};
            \end{axis}
        \end{tikzpicture}
        \label{fig:index-token}
	}
    \caption{Comparison of indexing efficiency.}
    \label{fig:index-efficiency}
\end{figure}

\begin{table*}[ht]
    \centering
    \caption{Comparing ArchRAG with other RAG methods on the specific QA tasks under different LLM backbone models.}
    \setlength{\tabcolsep}{10pt}
    \begin{tabular}{llcccc}
        \toprule
\multirow{2}{*}{LLM backbone} &\multirow{2}{*}{Methods} &  \multicolumn{2}{c}{Multihop-RAG} & \multicolumn{2}{c}{HotpotQA} \\
\cmidrule(lr){3-4} \cmidrule(lr){5-6}
       & & (Accuracy) & (Recall) & (Accuracy) & (Recall)\\
\midrule
\multirow{4}{*}{Llama3.1-8B} & Vanilla RAG & 58.6 & 31.4 & 50.6 & 56.1 \\
&HippoRAG& 38.9 & 19.1 & \underline{51.3} & \underline{56.8} \\ 
&RAPTOR& \underline{59.1}& \underline{34.1} & N/A & N/A \\
&ArchRAG& \textbf{68.8} & \textbf{37.2} & \textbf{65.4} & \textbf{69.2} \\
\midrule
\multirow{4}{*}{GPT-3.5-turbo} & Vanilla RAG & 65.9 & \underline{32.8} & \underline{60.7} & \textbf{65.9} \\
&HippoRAG& \underline{68.9} & 31.4 & 58.0 & 62.3 \\
&RAPTOR& 64.4 & \textbf{34.6} & N/A & N/A \\
&ArchRAG& \textbf{67.2} & 31.5 & \textbf{62.8} & \underline{65.0} \\
\midrule
\multirow{4}{*}{GPT-4o-mini}&Vanilla RAG& \underline{71.4} & \underline{32.8} & \underline{68.2} & \underline{70.1} \\
&HippoRAG& 70.5 & 31.6 & 65.0 & 68.5 \\
&RAPTOR& 70.1 & 32.6 & N/A & N/A \\
&ArchRAG& \textbf{77.3} & \textbf{33.8} & \textbf{69.9} & \textbf{73.8} \\
        \bottomrule
    \end{tabular}
    \label{tab:backbone}
\end{table*}

To further demonstrate the effectiveness of ArchRAG, we conduct the following experiments:

$\bullet$ \textbf{Effectiveness of LLM backbones.}
Given the limited budget, we restrict our evaluation to GPT-4o-mini and GPT-3.5-turbo as the LLM backbones, and compare a representative subset of strong RAG methods on the HotpotQA and Multihop-RAG datasets.
As strong LLMs with hundreds of billions of parameters (e.g., GPT-3.5-turbo) possess enhanced capabilities, our proposed ArchRAG may also benefit from performance improvement.
As shown in Table \ref{tab:backbone}, the results of Llama 3.1-8B are similar to those of GPT-3.5-turbo, as Llama3.1's capabilities are comparable to those of GPT-3.5-turbo \cite{dubey2024llama}.
GPT-4o-mini performs better than other LLM backbones because of its exceptional reasoning capabilities.

Besides, we have compared several strong RAG baselines under different LLM backbones.
As LLMs' parameters and reasoning capabilities increase, all RAG approaches benefit from performance gains, especially HippoRAG.
ArchRAG consistently achieves state-of-the-art performance across most settings.

$\bullet$ \textbf{Community quality of different clustering methods.} 
We evaluate the community quality of our proposed LLM-based hierarchical clustering framework using four clustering algorithms (weighted Leiden, Spectral Clustering~\cite{von2007tutorial}, SCAN~\cite{xu2007scan}, and node2vec~\cite{grover2016node2vec} with KMeans), combined with two graph augmentation techniques (the KNN algorithm and CODICIL~\cite{CODICIL2013efficient}).
The resulting communities are assessed using the Calinski-Harabasz Index (CHI)~\cite{calinski1974CHIndex} and Cosine Similarity (Sim) \cite{charikar2002similarity}, where higher values indicate better quality.
%
Further details on the clustering implementation and evaluation metrics can be found in the appendix.
Figures \ref{fig:ch} and \ref{fig:sim} show that combining KNN or CODICIL with the weighted Leiden algorithm significantly enhances community detection quality.

\pgfplotstableread[row sep=\\,col sep=&]{
methods & cos & topk  \\ 
2 & 0.8836 & 0.8772 \\
4 & 0.8549 & 0.7791 \\
6 & 0.6963 & 0.7813 \\
8 & 0.7689 & 0.7662 \\
}\simhotpot
\pgfplotstableread[row sep=\\,col sep=&]{
methods & cos & topk  \\ 
2 & 4.8161 & 4.9554 \\
4 & 4.6785 & 1.3950 \\
6 & 1.1961 & 1.5548 \\
8 & 1.0184 & 1.0041 \\
}\chhotpot

\pgfplotstableread[row sep=\\,col sep=&]{
methods & cos & topk  \\ 
2 & 0.8869 & 0.8737 \\
4 & 0.8441 & 0.7856 \\
6 & 0.6941 & 0.7748 \\
8 & 0.7653 & 0.4793 \\
}\simmultihop
\pgfplotstableread[row sep=\\,col sep=&]{
methods & cos & topk  \\ 
2 & 4.6783 & 4.5866 \\
4 & 4.7229 & 1.5135 \\
6 & 1.4620 & 1.4930 \\
8 & 1.0127 & 1.0021 \\
}\chmultihop

\begin{figure}[h]
    \centering
    \quad \ref{quality_ch}\\
    \subfigure[Multihop-RAG]{
    \begin{tikzpicture}[scale=0.45]
            \begin{axis}[
                ybar=0.5pt,
                bar width=0.6cm,
                width=0.48\textwidth,
                height=0.25\textwidth,
                xtick=data,	
                xticklabels={{\fontsize{12pt}{14pt}\selectfont Leiden}, {\fontsize{12pt}{14pt}\selectfont Spectral}, {\fontsize{12pt}{14pt}\selectfont SCAN}, {\fontsize{12pt}{14pt}\selectfont Node2Vec}},
                legend style={
                anchor=north,legend columns=4,
                draw=none},
                legend image code/.code={
                    \draw [#1, line width=0.5pt] (0cm,-0.1cm) rectangle (0.3cm,0.2cm); },
                legend to name=quality_ch,
                xmin=1,xmax=9,
                ymin=0,ymax=5.5,
                tick align=inside,
                ticklabel style={font=\Huge},
                every axis plot/.append style={line width = 2.5pt},
                every axis/.append style={line width = 2.5pt},
                ylabel={\textbf{{\Huge CHI}}}
                ]
\addplot[fill=c6] table[x=methods,y=cos]{\chmultihop};
\addplot[fill=c3] table[x=methods,y=topk]{\chmultihop};
\legend{\small {KNN},\small {CODICIL}}
            \end{axis}
        \end{tikzpicture}
        \label{fig:ch-multihop}
    }
    \subfigure[HotpotQA]{
		\begin{tikzpicture}[scale=0.45]
            \begin{axis}[
                ybar=0.5pt,
                bar width=0.6cm,
                width=0.48\textwidth,
                height=0.25\textwidth,
                xtick=data,	
                xticklabels={{\fontsize{12pt}{14pt}\selectfont Leiden}, {\fontsize{12pt}{14pt}\selectfont Spectral}, {\fontsize{12pt}{14pt}\selectfont SCAN}, {\fontsize{12pt}{14pt}\selectfont Node2Vec}},
                legend style={
                anchor=north,legend columns=4,
                draw=none},
                legend image code/.code={
                    \draw [#1, line width=0.5pt] (0cm,-0.1cm) rectangle (0.3cm,0.2cm); },
                xmin=1,xmax=9,
                ymin=0,ymax=5.5,
                tick align=inside,
                ticklabel style={font=\Huge},
                every axis plot/.append style={line width = 2.5pt},
                every axis/.append style={line width = 2.5pt},
                ylabel={\textbf{{\Huge CHI}}}
                ]
\addplot[fill=c6] table[x=methods,y=cos]{\chhotpot};
\addplot[fill=c3] table[x=methods,y=topk]{\chhotpot};
            \end{axis}
        \end{tikzpicture}
        \label{fig:ch-hotpotqa}
	}
    \caption{Community quality evaluated by CH Index.}
    \label{fig:ch}
\end{figure}

\begin{figure}[h]
    \centering
    \quad \ref{quality_sim}\\
    \subfigure[Multihop-RAG]{
    \begin{tikzpicture}[scale=0.45]
            \begin{axis}[
                ybar=0.5pt,
                bar width=0.6cm,
                width=0.47\textwidth,
                height=0.25\textwidth,
                xtick=data,	
                xticklabels={{\fontsize{12pt}{14pt}\selectfont Leiden}, {\fontsize{12pt}{14pt}\selectfont Spectral}, {\fontsize{12pt}{14pt}\selectfont SCAN}, {\fontsize{12pt}{14pt}\selectfont Node2Vec}},
                legend style={
                anchor=north,legend columns=4,
                draw=none},
                legend image code/.code={
                    \draw [#1, line width=0.5pt] (0cm,-0.1cm) rectangle (0.3cm,0.2cm); },
                legend to name=quality_sim,
                xmin=1,xmax=9,
                ymin=0.4,ymax=1,
                tick align=inside,
                ticklabel style={font=\Huge},
                every axis plot/.append style={line width = 2.5pt},
                every axis/.append style={line width = 2.5pt},
                ylabel={\textbf{{\Huge Sim}}}
                ]
\addplot[fill=c6] table[x=methods,y=cos]{\simmultihop};
\addplot[fill=c3] table[x=methods,y=topk]{\simmultihop};
\legend{\small {KNN},\small {CODICIL}}
            \end{axis}
        \end{tikzpicture}
        \label{fig:sim-multihop}
    }
    \subfigure[HotpotQA]{
		\begin{tikzpicture}[scale=0.45]
            \begin{axis}[
                 ybar=0.5pt,
                bar width=0.6cm,
                width=0.47\textwidth,
                height=0.25\textwidth,
                xtick=data,	
                xticklabels={{\fontsize{12pt}{14pt}\selectfont Leiden}, {\fontsize{12pt}{14pt}\selectfont Spectral}, {\fontsize{12pt}{14pt}\selectfont SCAN}, {\fontsize{12pt}{14pt}\selectfont Node2Vec}},
                legend style={
                anchor=north,legend columns=4,
                draw=none},
                legend image code/.code={
                    \draw [#1, line width=0.5pt] (0cm,-0.1cm) rectangle (0.3cm,0.2cm); },
                xmin=1,xmax=9,
                ymin=0.6,ymax=1,
                ytick={0.6,0.8,1},
                tick align=inside,
                ticklabel style={font=\Huge},
                every axis plot/.append style={line width = 2.5pt},
                every axis/.append style={line width = 2.5pt},
                ylabel={\textbf{{\Huge Sim}}}
                ]
\addplot[fill=c6] table[x=methods,y=cos]{\simhotpot};
\addplot[fill=c3] table[x=methods,y=topk]{\simhotpot};

            \end{axis}
        \end{tikzpicture}
        \label{fig:sim-hotpotqa}
	}
    \caption{Community quality evaluated by Cosine Similarity.}
    \label{fig:sim}
\end{figure}

$\bullet$ \textbf{Effectiveness of our attributed clustering algorithm.}
To further demonstrate the effectiveness of our attributed clustering algorithm, we evaluate the quality of communities (i.e., CHI and Cosine Similarity) generated by our attributed clustering algorithm compared to those produced by the Leiden algorithm, which is used in GraphRAG for structural clustering.
As shown in Table~\ref{tab:c-quality}, our attribute-based clustering consistently yields higher-quality communities.

\begin{table*}[ht]
    \centering
    \setlength{\tabcolsep}{13pt}
    \caption{Comparing ArchRAG with other RAG methods on the RAG-QA Area dataset. Each entry denotes the win ratio and win + tie ratio of the corresponding method against the ground-truth annotations, based on LLM evaluation. }
    \begin{tabular}{c|ccccc}
\toprule
Method& Lifestyle & Recreation & Science & Technology & Writing  \\
\midrule
Vanilla RAG & 17.5 / 20.5 & 17.0 / 25.0&32.5 / 37.0&28.5 / 34.0&15.0 / 16.5\\
HippoRAG & 26.5 / 26.5& 29.5 / 30.5&49.5 / 49.5&42.0 / 42.5&21.0 / 21.5 \\
RAPTOR & 17.0 / 19.5& 18.5 / 24.5&39.0 / 45.0&33.0 / 35.5&25.0 / 26.5 \\
ArchRAG & \textbf{49.5 / 50.0}& \textbf{41.5 / 41.5} & \textbf{56.0 / 56.0} & \textbf{59.0 / 59.5} & \textbf{45.0 / 45.0} \\
\bottomrule
    \end{tabular}
    \label{tab:area}
\end{table*}

\begin{table}[h]
    \centering
    \setlength{\tabcolsep}{10pt}
    \caption{Comparison of Community Quality between Our Attributed Clustering Method and Leiden}
    \begin{tabular}{c|cccc}
\toprule
\multirow{2}{*}{Method} &  \multicolumn{2}{c}{Multihop-RAG} & \multicolumn{2}{c}{HotpotQA} \\
\cmidrule(lr){2-3} \cmidrule(lr){4-5}
&  (CHI) & (Sim) & (CHI) & (Sim)\\
\midrule
 Leiden & 3.02 & 0.71& 3.42 & 0.71 \\
 Ours & 4.68 & 0.89 &4.82& 0.88 \\
\bottomrule
    \end{tabular}
    \label{tab:c-quality}
\end{table}

$\bullet$ \textbf{More experiments on the additional dataset.}
We further conduct experiments on the RAG-QA Arena dataset~\cite{han2024rag}, a high-quality, multi-domain benchmark featuring human-annotated, coherent long-form answers.
To the best of our capability, we use publicly available data from five domains (including lifestyle, recreation, science, technology, and writing), selecting 200 questions per domain.
Following prior work, we employ LLMs as evaluators to compare the RAG-generated responses with ground-truth answers in terms of win ratio and win + tie ratio.
We evaluate the top-performing methods (including Vanilla RAG, HippoRAG, RAPTOR, and our ArchRAG) and present the results in terms of the win ratio and win + tie ratio against the ground-truth annotations in the table below.
As shown in Table~\ref{tab:area}, ArchRAG consistently achieves state-of-the-art performance across all evaluated settings.

%

\pgfplotstableread[row sep=\\,col sep=&]{
topk & Accuracy & Recall \\
1 & 63.2 & 37 \\
3 & 66.1 & 36.5 \\
5 & 68.8 & 37.2 \\
7 & 63.4 & 37.1 \\
9 & 67.4 & 37 \\
}\topkmultihop
\pgfplotstableread[row sep=\\,col sep=&]{
topk & Accuracy & Recall \\
1 & 61.5 & 65.6 \\
3 & 61.2 & 65 \\
5 & 65.4 & 69.2 \\
7 & 63.6 & 67.3 \\
9 & 47.7 & 52.5 \\
}\topkhotpot

\begin{figure}[h]
    \centering

    \quad \ref{topk_leg} \\
    \vspace{-5pt}
    \subfigure[Multihop-RAG]{
		\begin{tikzpicture}[scale=0.45]
			\begin{axis}[
                legend style = {
                    legend columns=-1,
                    inner sep=0pt,
                    draw=none,
                },
legend image post style={scale=1.1, line width=1pt},
                legend to name=topk_leg,
                xmin=-1, xmax=11,
				ymin=20, ymax=80,
				xtick = {1,3,5,7,9},
				xticklabels = {2,3,5,7,9},
				mark size=6.0pt, 
				width=0.45\textwidth,
                height=0.25\textwidth,
ylabel={\Huge \bf Metrics},
ticklabel style={font=\Huge},
every axis plot/.append style={line width = 2.5pt},
every axis/.append style={line width = 2.5pt},
				]
\addplot [mark=triangle,color=c6] table[x=topk,y=Accuracy]{\topkmultihop};
\addplot [mark=o,color=c3] table[x=topk,y=Recall]{\topkmultihop};
\legend{\small Accuracy,\small Recall}
			\end{axis}
		\end{tikzpicture}
	}
    \subfigure[HotpotQA]{
		\begin{tikzpicture}[scale=0.45]
			\begin{axis}[
                xmin=-1, xmax=11,
				ymin=20, ymax=80,
				xtick = {1,3,5,7,9},
				xticklabels = {2,3,5,7,9},
				mark size=6.0pt, 
				width=0.45\textwidth,
                height=0.25\textwidth,
ylabel={\Huge \bf Metrics},
ticklabel style={font=\Huge},
every axis plot/.append style={line width = 2.5pt},
every axis/.append style={line width = 2.5pt},
				]
\addplot [mark=triangle,color=c6] table[x=topk,y=Accuracy]{\topkhotpot};
\addplot [mark=o,color=c3] table[x=topk,y=Recall]{\topkhotpot};
			\end{axis}
		\end{tikzpicture}
	}
    \caption{Comparative analysis of the different numbers of retrieval elements in ArchRAG.}
    \label{fig:topk}
\end{figure}

$\bullet$ \textbf{Effect of $k$ values.}
We compare the performance of ArchRAG under different retrieved elements.
As shown in Figure \ref{fig:topk}, the performance of ArchRAG shows little variation when selecting different retrieval elements (i.e., communities and entities in each layer). 
This suggests that the adaptive filtering process can reliably extract the most relevant information from the retrieval elements and integrate it to generate the answer.



$\bullet$ \textbf{Case study. }
We present an additional case study from Multihop-RAG.
As shown in Figure~\ref{fig:case}, only our method generates the correct answer, while others either provide incorrect or irrelevant information. We only show the core output for brevity, with the remaining marked as ``</>''.
We also show the retrieval and adaptive filtering process of ArchRAG in Figure~\ref{fig:case-retrieval}. The results demonstrate that ArchRAG effectively retrieves relevant information and filters out noise, leading to a correct final answer.

\begin{table}[h]
    \centering
    \caption{Distribution of HippoRAG's ER Errors}
    \begin{tabular}{c|c|c}
\toprule
Datasets & Null Entity Rate& \makecell{Low-Quality \\ Entity Rate} \\
\midrule
Multihop-RAG & 1.3\% & 11.9\% \\
HotpotQA & 5.0\% & 15.8\% \\
\bottomrule
    \end{tabular}
    \label{tab:wrong}
\end{table}

$\bullet$ \textbf{Discussion of the performance of other graph-based RAG methods. }
On the Multihop-RAG dataset, {HippoRAG} performs worse than retrieval-only methods while outperforming retrieval-based methods on the HotpotQA dataset.
This is mainly because, on the HotpotQA dataset, passages are segmented by the expert annotators; that is, passages can provide more concise information, whereas on  Multihop-RAG, passages are segmented based on chunk size, which may cause the LLM to lose context and produce incorrect answers.
Besides, HippoRAG also suffers from inaccurate entity recognition.
In addition, we also provide a brief analysis of why some graph-based RAG methods underperform.
Specifically, they often fail to retrieve relevant information during the search process.
For example:

\begin{itemize}
    \item \textit{GraphRAG:} The suboptimal performance of GraphRAG has been widely observed in research.
    For instance, Zhang et al.~\cite{zhang2024sirerag} find that GraphRAG tends to respond with "I don't know" when retrieving irrelevant content, indicating that it prefers not to give a concrete answer.
    PIKE-RAG~\cite{wang2025pike} identifies that GraphRAG tends to echo the query and include meta-information about answers within its graph structure.
    Through manual verification, we confirm the presence of similar issues in our experimental results. In particular, GraphRAG-Local Search also frequently returns irrelevant content.
    \item \textit{HippoRAG/LightRAG: } We attribute part of HippoRAG's suboptimal performance to entity recognition issues. HippoRAG first extracts relevant entities from questions and then retrieves chunks based on these entities.
    However, it sometimes fails to identify appropriate entities (Entity Recognition Errors, ER Errors), and we conduct corresponding experiments, as shown in Table~\ref{tab:wrong}. For example, on HotpotQA, HippoRAG fails to detect entities in 5.0\%  of questions and identifies low-quality ones in 15.8\% (e.g., less than 3 characters, numeric-only terms, or sentence entities with more than 5 words). 
    These issues make it difficult to match accurate entities in the KG, resulting in irrelevant retrieval and suboptimal performance. Similar issues also impair LightRAG's performance because it requires accurate topic identification from questions. We also observe that as the LLM's capability improves, the performance of these methods increases significantly, as shown in the experiment \textbf{Effectiveness of LLM backbone}.
\end{itemize}

\begin{figure*}[htbp]
    \centering
    \renewcommand{\arraystretch}{1.2}

    \vspace{0.5em}
    \begin{tcolorbox}[colback=gray!5, colframe=gray!50, title=Question \& Ground-truth]
\textit{Question:} Who is the individual associated with generative AI technology that was reportedly ousted from a leading AI organization, but is recognized for brilliance and generosity in the industry, and is currently planning to launch a new venture according to reports from TechCrunch and Fortune?

\textit{Ground-truth:} Sam Altman
    \end{tcolorbox}
\vspace{-0.5em}

    \begin{tcolorbox}[ragstyle, title=Responses of Different Methods]

\textbf{Zero-shot} \\
{\bf \textcolor{red}{Jesse Schell}}. \newline
He is the CTO of Calm and was the lead designer on the ``Star Wars" theme park ride at Disney.

\vspace{1.3em}\hrule\vspace{0.8em}

\textbf{Vanilla RAG} \\
It seems like there are multiple articles and topics discussed in this text. Here's a brief summary of each section: \newline
1.{\bf \textcolor{red}{Robotics predictions.}} The article discusses the future of robotics, with predictions made by Brian Heater. $\cdots$ \newline
2.{\bf \textcolor{red}{TechCrunch's Week in Review.}} This section summarizes various tech industry happenings. $\cdots$ \newline
3.{\bf \textcolor{red}{Bumble gets a new CEO.}} Bumble announces a change in leadership. $\cdots$

\vspace{1.3em}\hrule\vspace{0.8em}

\textbf{HippoRAG} \\
There is {\bf \textcolor{red}{no question} }in the provided text. It appears to be a collection of news articles and updates related to OpenAI's ChatGPT. $\cdots$

\vspace{1.3em}\hrule\vspace{0.8em}

\textbf{GraphRAG-Global} \\
\textit{Key Points and Implications} \newline
The individual associated with generative AI technology who was reportedly ousted is {\bf \textcolor{red}{Andrew Ng}}. $\cdots$ \newline
According to multiple analysts, {\bf \textcolor{red}{Andrew Ng}} [Data: Reports (5, 6)] $\cdots$ \newline
\textit{Implications of the New Venture} $\cdots$

\vspace{1.3em}\hrule\vspace{0.8em}

\textbf{ArchRAG} \\
{\bf \textcolor{cyan}{Sam Altman}}.  \newline
This information is supported by multiple analysts' reports, which highlight {\bf \textcolor{cyan}{Sam Altman's}} involvement in various AI-related projects. The direct answer is simply ``{\bf \textcolor{cyan}{Sam Altman}}''. $\cdots$ \newline
Note: I have not included any information from Analyst 2, 3, or 7 as their reports are not relevant. $\cdots$
\end{tcolorbox}
\caption{Case study of responses by different RAG methods on a question from the Multihop-RAG dataset. {\bf \textcolor{red}{RED TEXT}} indicates incorrect content, {\bf \textcolor{cyan}{CYAN TEXT}} denotes correct content, and $\cdots$ marks omitted irrelevant parts.}
    \label{fig:case}
\end{figure*}

\begin{figure*}
    \centering  
    \vspace{2em}
    \begin{lstlisting}
Top Retrieved Entities:
 - Entity 1058: OPENAI
   Description: OpenAI is a tech company founded by Sam Altman ...

 - Community 4097: Microsoft, OpenAI, and AI Regulation
   Summary: Discussion between Satya Nadella and Sam Altman ...

Adaptive Filtering Result:
 - Sam Altman is among the backers of an AI startup.
 - Score: 80.0 (Reports: 1) ...
\end{lstlisting}
    \vspace{-1em}

    \caption{ArchRAG Retrieval \& Filtering Output. $...$ marks omitted irrelevant parts.}
    \label{fig:case-retrieval}
\end{figure*}

\begin{figure*}[t] 
\begin{AIbox}{Prompt for generating abstract questions}
{\bf Prompt:} \\
{
Given the following description of a dataset:

\{description\}

Please identify 5 potential users who would engage with this dataset. For each user, list 5 tasks they would perform with this dataset. Then, for each (user, task) combination, generate 5 questions that require a high-level understanding of the entire dataset.

Output the results in the following structure:
\begin{description}
    \item[- User 1: \text{[user description]}]
    \begin{description}
        \item \item[- Task 1: \text{[task description]}]
        \begin{description}  
            \item \item  - Question 1:
            \item  - Question 2:
            \item  - Question 3:
            \item  - Question 4:
            \item  - Question 5:
        \end{description}  
       \item[- Task 2: \text{[task description]}]
            \item ...
       \item[- Task 5: \text{[task description]}]
    \end{description}   
\end{description}    

\begin{description}
    \item[- User 2: \text{[user description]}]
        \item...
\end{description}

\begin{description}
    \item[- User 5: \text{[user description]}]
        \item...
\end{description} 
    
    Note that there are 5 users and 5 tasks for each user, resulting in 25 tasks in total. Each task should have 5 questions, resulting in 125 questions in total.
    The Output should present the whole tasks and questions for each user.
    
    Output:

}

\end{AIbox} 
\caption{The prompt for generating abstract questions.}
\label{fig:prompt_summary}
\end{figure*}

\begin{figure*}[t] 
\begin{AIbox}{Prompt for LLM-based multi-dimensional comparison}
{\bf Prompt:} \\
{
You will evaluate two answers to the same question based on three criteria: {\bf Comprehensiveness}, {\bf Diversity}, {\bf Empowerment}, and {\bf Directness}.
\begin{itemize}
    \item Comprehensiveness: How much detail does the answer provide to cover all aspects and details of the question?
    \item Diversity: How varied and rich is the answer in providing different perspectives and insights on the question?
    \item Empowerment: How well does the answer help the reader understand and make informed judgments about the topic?
    \item Directness: How specifically and clearly does the answer address the question?
\end{itemize}

For each criterion, choose the better answer (either Answer 1 or Answer 2) and explain why. Then, select an overall winner based on these four categories.

Here is the {\bf question}: 
\begin{verbatim}
    Question: {query}
\end{verbatim}

Here are the two answers:
\begin{verbatim}
    Answer 1: {answer1}
    Answer 2: {answer2}
\end{verbatim}

Evaluate both answers using the four criteria listed above and provide detailed explanations for each criterion. Output your evaluation in the following JSON format:
\begin{verbatim}
{
    "Comprehensiveness": {
        "Winner": "[Answer 1 or Answer 2]",
        "Explanation": "[Provide one sentence explanation here]"
    },
    "Diversity": {
        "Winner": "[Answer 1 or Answer 2]",
        "Explanation": "[Provide one sentence explanation here]"
    },
    "Empowerment": {
        "Winner": "[Answer 1 or Answer 2]",
        "Explanation": "[Provide one sentence explanation here]"
    },
    "Overall Winner": {
        "Winner": "[Answer 1 or Answer 2]",
        "Explanation": "[Briefly summarize why this answer is the overall
        winner]"
    }
}
\end{verbatim}

Output:

}

\end{AIbox} 
\caption{The prompt for the evaluation of abstract QA.}
\label{fig:eval_summary}
\end{figure*}

\begin{figure*}[t] 
\begin{AIbox}{Prompt for Adaptive filtering-based generation}
{\bf Filter Prompt:} \\
{
\# Role

You are a helpful assistant responding to questions about data in the tables provided.

\# Goal

Generate a response consisting of a list of key points that respond to the user's question, summarizing all relevant information in the input data tables.
You should use the data provided in the data tables below as the primary context for generating the response.
If you don't know the answer or if the input data tables do not contain sufficient information to provide an answer, just say so. Do not make anything up.

Each key point in the response should have the following element:
\begin{itemize}
    \item Description: A comprehensive description of the point.
    \item Importance Score: An integer score between 0-100 that indicates how important the point is in answering the user's question. An `I don't know' type of response should have a score of 0.
\end{itemize}

The response should be JSON formatted as follows:

\begin{verbatim}
{
"points": [
    {
        "description": "Description of point 1", 
        "score": score_value
    },
    // ... more points
]
}
\end{verbatim}
\# User Question
\begin{verbatim}
    {user_query}
\end{verbatim}

\# Data tables
\begin{verbatim}
    {context_data}
\end{verbatim}
Output:
}
\tcbline
{\bf Merge Prompt:} \\
{
\# Role

You are a helpful assistant responding to questions and may use the provided data as a reference.

\# Goal

You should incorporate insights from all the reports from multiple analysts who focused on different parts of the dataset to support your answer. Please note that the provided information may contain inaccuracies or be unrelated. If the provided information does not address the question, please respond using what you know:

\begin{itemize}
    \item A response that utilizes the provided information, ensuring that all irrelevant details from the analysts' reports are removed.
    \item A response to the user's query based on your existing knowledge when <Analyst Reports> is empty.
\end{itemize}

The final response should merge the relevant information into a comprehensive answer that clearly explains all key points and implications, tailored to the appropriate response length and format.
Note that the analysts' reports provided below are ranked in the {\it descending order of importance}. Do not include information where the supporting evidence for it is not provided.

\# Target response length and format 

\begin{verbatim}
    {response_format}
\end{verbatim}

\# User Question 

\begin{verbatim}
    {user_query}
\end{verbatim}

\# Analyst Reports 

\begin{verbatim}
    {report_data}
\end{verbatim}

Output:
}

\end{AIbox} 
\caption{The prompt for adaptive filtering-based generation.}
\label{fig:prompt_afg}
\end{figure*}

\clearpage

\end{document}